\def\mdseries@tt{m} 
\pgfplotsset{compat=1.5}
  \let\c@rownum\rownum
  \def\therownum{\@arabic\rownum}%
\def\expandafter\UrlBreaks\expandafter{\UrlBreaks
  \do\a\do\b\do\c\do\d\do\e\do\f\do\g\do\h\do\i\do\j%
  \do\k\do\l\do\m\do\n\do\o\do\p\do\q\do\r\do\s\do\t%
  \do\u\do\v\do\w\do\x\do\y\do\z\do\A\do\B\do\C\do\D%
  \do\E\do\F\do\G\do\H\do\I\do\J\do\K\do\L\do\M\do\N%
  \do\O\do\P\do\Q\do\R\do\S\do\T\do\U\do\V\do\W\do\X%
  \do\Y\do\Z%
}
\theoremstyle{definition}
\newtheorem{defn}{Definition}
\newcommand{\tool}{AC\-Miner\xspace}
\newcommand{\cp}{control predicate\xspace}
\newcommand{\cps}{control predicates\xspace}
\newcommand{\CP}{Control Predicate\xspace}
\newcommand{\CPS}{Control Predicates\xspace}
\newcommand{\cq}{context query\xspace}
\newcommand{\cqs}{context queries\xspace}
\newcommand{\CQ}{Context Query\xspace}
\newcommand{\CQS}{Context Queries\xspace}
\newglossaryentry{target_p1} {
  name={target entry point},
  plural={target entry points},
  description={},
  category={common}
}
\newglossaryentry{support_p1} {
  name={supporting entry point},
  plural={supporting entry points},
  description={},
  category={common}
}
\newglossaryentry{supportac_p1} {
  name={supporting authorization check},
  plural={supporting authorization checks},
  description={},
  category={common}
}
\newglossaryentry{recommend_p1} {
  name={recommended authorization check},
  plural={recommended authorization checks},
  description={},
  category={common}
}
\newglossaryentry{arule_p1} {
  name={association rule},
  plural={association rules},
  description={},
  category={common}
}
\newglossaryentry{fp_p1} {
  name={false positive},
  plural={false positives},
  description={},
  category={common}
}
\newcommand{\pInitialContextQueries}{33}
\newcommand{\pManualContextQueries}{620}
\newcommand{\pTotalContextQueries}{875}
\FPeval{\pPrecentIdContextQueries}{round((\pManualContextQueries/\pTotalContextQueries)*100,0)}
\FPeval{\pTotalDiffManualContextQueries}{clip(\pTotalContextQueries{}-\pManualContextQueries{})}
\FPeval{\pPrecentIncreaseContextQueries}{round(((\pTotalContextQueries{}-\pInitialContextQueries{})/\pInitialContextQueries{})*100,0)}
\newcommand{\pInitialControlPredicates}{25808}
\newcommand{\pTotalControlPredicates}{3308}
\newcommand{\pHighRiskHighImpactEp}{7}
\newcommand{\pHighRiskHighImpactRule}{7}
\newcommand{\pHighRiskEp}{1}
\newcommand{\pHighRiskRule}{1}
\newcommand{\pHighImpactEp}{12}
\newcommand{\pHighImpactRule}{14}
\newcommand{\pLowRiskLowImpactEp}{8}
\newcommand{\pLowRiskLowImpactRule}{8}
\FPeval{\pTotalRiskImpactEp}{clip(\pHighRiskHighImpactEp{}+\pHighRiskEp{}+\pHighImpactEp{}+\pLowRiskLowImpactEp{})}
\FPeval{\pTotalRiskImpactRule}{clip(\pHighRiskHighImpactRule{}+\pHighRiskRule{}+\pHighImpactRule{}+\pLowRiskLowImpactRule{})}
\newcommand{\pUserEps}{14}
\newcommand{\pOtherEps}{11}
\newcommand{\pEpTotal}{4004}
\newcommand{\pEpWithLogic}{1995}
\newcommand{\pEpWithAtLeastOneRule}{246}
\FPeval{\pEpReduction}{round(((\pEpTotal - \pEpWithAtLeastOneRule) / \pEpTotal)*100,0)}
\newcommand{\pRulesDiffInFunctionality}{189}
\newcommand{\pRulesDiffArgs}{66}
\newcommand{\pRulesNoise}{53}
\newcommand{\pRulesSpecialCaller}{37}
\newcommand{\pRulesPermissionCheck}{23}
\newcommand{\pRulesEquivalentChecks}{10}
\newcommand{\pRulesOther}{20}
\FPeval{\pTotalFalsePostives}{clip(\pRulesDiffInFunctionality{}+\pRulesDiffArgs{}+\pRulesNoise{}+\pRulesSpecialCaller{}+\pRulesPermissionCheck{}+\pRulesEquivalentChecks{}+\pRulesOther{})}
\newcommand{\pRulesSpeedUp}{7}
\newcommand{\pRulesIncreasedAccess}{2}
\newcommand{\pRulesUnusedContextQueries}{16}
\FPeval{\pTotalNonVulnerability}{clip(\pRulesSpeedUp{}+\pRulesIncreasedAccess{}+\pRulesUnusedContextQueries{})}
\FPeval{\pTotalOtherInconsistencies}{clip(\pTotalFalsePostives{}+\pTotalNonVulnerability{})}
\newcommand{\pTotalAssociationRules}{453}
\newcommand{\pSubmittedBugReportsMedium}{2}
\newcommand{\mynote}[2]{
    \fbox{\bfseries\sffamily\scriptsize#1}
    {\small$\blacktriangleright$\textsf{\emph{#2}}$\blacktriangleleft$}}}
\newcommand{\mynote}[2]{}}
\newcommand{\myp}[1]{\vspace{0.1em}\noindent\textbf{#1:}}
\newcommand{\emparagraph}[1]{\vspace{0.1em}\textit{#1:}}
\newcommand\lword[1]{\leavevmode\nobreak\hskip0pt plus\linewidth\penalty50\hskip0pt plus-\linewidth\nobreak#1}
\newcommand{\ttt}[1]{%
  \begingroup
    \protect\renewcommand{\seqinsert}{\ifmmode\allowbreak\else\-\fi}%
    \protect\texttt{\protect\seqinsert{\protect\seqsplit{\small#1}}}%
  \endgroup
}
\newcommand{\tttscript}[1]{%
  \begingroup
    \protect\renewcommand{\seqinsert}{\ifmmode\allowbreak\else\-\fi}%
    \protect\texttt{\protect\seqinsert{\protect\seqsplit{\scriptsize#1}}}%
  \endgroup
}
\newcommand{\tttfoot}[1]{%
  \begingroup
    \protect\renewcommand{\seqinsert}{\ifmmode\allowbreak\else\-\fi}%
    \protect\texttt{\protect\seqinsert{\protect\seqsplit{\footnotesize#1}}}%
  \endgroup
}
\definecolor{darkgreen}{RGB}{0,102,0}
\definecolor{darkorange}{RGB}{255, 102, 0}
\lstdefinestyle{javaStyle} {
  language=Java,
  showspaces=false,
  showtabs=false,
  breaklines=false,
  showstringspaces=false,
  breakatwhitespace=true,
  numbers=left,
  numberstyle=\scriptsize,
  tabsize=2,
  captionpos=b,
  commentstyle=\bfseries\color{gray},
  keywordstyle=\bfseries\color{Plum},
  stringstyle=\color{red}\bfseries,
  basicstyle=\ttfamily\footnotesize,
  moredelim=[il][\textcolor{lightgray}]{\$\$},
  moredelim=[is][\textcolor{lightgray}]{\%\%}{\%\%}
}
\newcommand{\class}[1]{\ttt{#1}}
\newcommand{\method}[1]{\ttt{#1}}
\begin{document}
\title{\tool: Extraction and Analysis of Authorization Checks in Android's Middleware}

\author{Sigmund Albert Gorski III}
\affiliation{%
  \institution{North Carolina State University}
}
\email{sagorski@ncsu.edu}

\author{Benjamin Andow}
\affiliation{%
  \institution{North Carolina State University}
}
\email{beandow@ncsu.edu}

\author{Adwait Nadkarni}
\affiliation{%
  \institution{William \& Mary}
}
\email{nadkarni@cs.wm.edu}

\author{Sunil Manandhar}
\affiliation{%
  \institution{William \& Mary}
}
\email{sunil@cs.wm.edu}

\author{William Enck}
\affiliation{%
  \institution{North Carolina State University}
}
\email{whenck@ncsu.edu}

\author{Eric Bodden}
\affiliation{%
  \institution{Paderborn University}
}
\email{eric.bodden@uni-paderborn.de}

\author{Alexandre Bartel}
\affiliation{%
  \institution{University of Luxembourg}
}
\email{alexandre.bartel@uni.lu}

\renewcommand{\shortauthors}{Gorski et al.}

\begin{abstract}
Billions of users rely on the security of the Android platform to protect phones, tablets, and many different types of consumer electronics.
While Android's permission model is well studied, the \emph{enforcement} of the protection policy has received relatively little attention.
Much of this enforcement is spread across system services, taking the form of hard-coded checks within their implementations.
In this paper, we propose Authorization Check Miner (\tool), a framework for evaluating the correctness of Android's access control enforcement through consistency analysis of authorization checks.
\tool combines program and text analysis techniques to generate a rich set of authorization checks, mines the corresponding protection policy for each service entry point, and uses association rule mining at a service granularity to identify inconsistencies that may correspond to vulnerabilities.
We used \tool to study the AOSP version of Android 7.1.1 to identify \pTotalRiskImpactEp{} vulnerabilities relating to missing authorization checks.
In doing so, we demonstrate \tool's ability to help domain experts process thousands of authorization checks scattered across millions of lines of code.
\end{abstract}

\begin{extended}
\thanks{This is the extended version of the \tool paper published in the proceedings of the ninth ACM Conference on Data and Application Security and Privacy (CODASPY) 2019.}
\end{extended}

\maketitle

\section{Introduction}

Android has become the world's dominant computing platform, powering over 2 billion devices by mid-2017~\cite{android_stats}.
Not only is Android the primary computing platform for many end-users, it also has significant use by business enterprises~\cite{android_business} and government agencies~\cite{android_governmenta,android_governmentb}. 
As a result, any security flaw in the Android platform is likely to cause significant and widespread damage, lending immense importance to evaluating the platform's security.

While Android is built on Linux, it has many differences.
A key appeal of the platform is its semantically rich application programming interfaces (APIs) that provide application developers simple and convenient abstractions to access information and resources (e.g., retrieve the GPS location, record audio using the microphone, and take a picture with the camera).
This functionality, along with corresponding security checks, is implemented within a collection of privileged userspace services.
%
While most Android security research has focused on third party applications~\cite{eom09b,fch+11,fwm+11,egc+10,pfnw12,ne13,sbl+14,bbh+15}, the several efforts that consider platform security highlight the need for more systematic evaluation of security and access control checks within privileged userspace services (e.g., evidence of system apps re-exposing information without security checks~\cite{gzwj12, wgz+13}, or missing checks in the Package Manager service leading to Pile-Up vulnerabilities~\cite{xpw+14}). 

To date, only two prior works have attempted to evaluate the correctness of access control logic within Android's system services. Both Kratos~\cite{socq+16} and AceDroid~\cite{ahs+18} approximate correctness through consistency measures, as previously done for evaluating correctness of security hooks in the Linux kernel~\cite{ejz02,jez04,tzm+08}. However, these prior works have limitations. Kratos only considers a small number of manually-defined authorization checks (e.g., it excludes App Ops checks). AceDroid considers a larger set of authorization checks, but these are still largely manually defined, primarily through observation. Kratos performs coarse-grained analysis using call-graphs, leading to imprecision. AceDroid's program analysis provides better precision, but oversimplifies its access control representation, making it difficult to identify vulnerabilities within single system images.

In this paper, we propose Authorization Check Miner (\tool), a framework for evaluating the correctness of Android's access control enforcement through consistency analysis of authorization checks.
\tool is based on several novel insights.
First, we avoid identification of protected operations (a key challenge in the space) by considering program logic between service entry points and code that throws a \ttt{SecurityException}.
Second, we propose a semi-automated method of discovering authorization checks.
More specifically, we mine all constants and names of methods and variables that influence conditional logic leading to throwing a \ttt{SecurityException}.
From this dataset, we identify security-relevant values (e.g., ``restricted'') and develop regular expressions to automatically identify those conditions during program analysis that mines policy rules from the code.
Third, we use association rule mining to identify inconsistent authorization checks for entry points in the same service.
Association rule mining has the added benefit of suggesting changes to make authorization checks more consistent, which is valuable when triaging results.
By applying this methodology, \tool allows a domain expert (i.e., a developer familiar with the AOSP source code) to quickly identify missing authorization checks that allow abuse by third-party applications.

We evaluated the utility of \tool by applying it to the AOSP code for Android 7.1.1.
Of the 4,004 total entry points to system services, \tool identified 1,995 with authorization checks. 
Of these entry points, the association rule mining identified inconsistencies in 246.
We manually investigated these 246 entry points with the aid of the rules suggested by the association rule mining, which allowed us to identify \pTotalRiskImpactEp{} security vulnerabilities. \tool not only reduced the effort required to analyze system services (i.e., by narrowing down to only 246 entry points out of 4004), but also allowed us to rapidly triage results by suggesting solutions. Out of the \pTotalRiskImpactEp{} security vulnerabilities,  \pHighRiskHighImpactEp{} were in security-sensitive entry points that may be exploited from third-party applications, and an additional \pHighImpactEp{} were in security-sensitive entry points that may be exploited from system applications. The rest were in entry points with relatively low security value.
All \pTotalRiskImpactEp{} vulnerabilities have been reported to Google.
At the time of writing, Google has confirmed \pSubmittedBugReportsMedium{} of these vulnerabilities as ``moderate severity.''

This paper makes the following contributions:
\begin{itemize}

    \item \textit{We design and implement \tool, a framework that enables a domain expert to identify and systematically evaluate inconsistent access control enforcement in Android's system services.}
    Our results show that this analysis is not only useful for identifying existing vulnerabilities, but also inconsistencies that may lead to vulnerabilities in the future.
    
    \item \textit{We combine program and text analysis techniques to generate a rich set of authorization checks used in system services.} This technique is a significant improvement over prior approaches that use manually-defined authorization checks.
    
    \item \textit{We use \tool to evaluate the AOSP version of Android 7.1.1 and identify \pTotalRiskImpactEp{} vulnerabilities.} All vulnerabilities have been reported to Google, which at the time of writing has classified \pSubmittedBugReportsMedium{} as ``moderate severity.''
    
\end{itemize}

\begin{extended}
We designed \tool to give security researchers and system developers deep insights into the security of Android’s system services. 
\end{extended}
This paper describes how \tool can systematically analyze the consistency of the authorization checks in the system services. 
However, \tool may also be useful for other forms of analysis.
For instance, \tool can aid regression testing, as it can be extended to highlight changes to the policy implementation on a semantic level. 
The information extracted by \tool can also be used to study the evolution of access control in Android, potentially discovering new vulnerabilities. 
Finally, since changes by OEMs have historically introduced vulnerabilities, OEMs can use \tool to validate their implemented checks against AOSP.

The remainder of this paper proceeds as follows. 
Section~\ref{sec:background} provides background. 
Section~\ref{sec:overview} describes the challenges and provides an overview of \tool. 
Section~\ref{sec:design} describes the design of \tool in detail. 
Sections~\ref{sec:eval} and~\ref{p1:sec:eval_otherInconsistencies} describe our analysis of the system services of AOSP 7.1.1. 
Section~\ref{p1:sec:limitations} describes the limitations of our approach. 
Section~\ref{sec:relwork} discusses related work. 
Section~\ref{sec:conc} concludes.

\section{Background and Motivation}
\label{sec:background}
\label{sec:motivation}

The Android middleware is implemented using the same component abstractions as third-party applications~\cite{eom09a}: activities, content providers, broadcast receivers, and services.
In this paper, we only consider service components, which provide daemon-like functionality.
Apps interface with service components via the \ttt{Binder} inter-process communication (IPC) mechanism, which consists of sending \textit{parcel} objects that indicate the target interface method being called via an integer. For the most part, Android's system services use the Android Interface Description Language (AIDL) to automatically generate the code that unmarshalles these parcels. Moreover, when interfacing with system services, third party apps rely on public APIs implementing a \textit{proxy} to construct the parcel.
When the parcel is unmarshalled by the service interface, the arguments are passed to a \textit{stub} that calls the corresponding \textit{entry point} method in the service component. 
\begin{extended}
While not all service entry points have corresponding public APIs, any third-party application can use reflection to invoke the entry points for ``hidden'' APIs.
\end{extended}

Android uses two broad techniques to enforce access control. 
For coarse-grained checks (i.e., at the component level), the Activity Manager Service (AMS) enforces policy specified in application manifest files.
This paper focuses on fine-grained checks (i.e., at the service entry point level), which are enforced using hard-coded logic within the service implementation.
This hard-coded logic includes variants of the \method{checkPermission} method, Unix Identifier (UID) checks, as well as many subtle checks based on service-specific state. 
Prior work~\cite{socq+16, ahs+18} has primarily relied on manual enumeration of these checks, which is error prone.
To simplify discussion in this paper, we refer to such methods that return or check Android system state as \textit{context queries.}

\begin{figure}
\centering
\resizebox{0.9\columnwidth}{!}{
\begin{tikzpicture}

\lstset{
    style=javaStyle,
    emph=[1]{hasBaseUserRestriction,isValidRestriction,checkManageUsersPermission,hasUserRestriction,isSameApp,hasManageUsersPermission,getCallingUid,checkComponentPermission},
    emphstyle=[1]{\color{blue}\bfseries},
    emph=[2]{String,SecurityException,UserRestrictionsUtils,UserHandle,Binder,ActivityManager,PackageManager,Process},
    emphstyle=[2]{\color{darkgreen}\bfseries},
    emph=[3]{key,userId,message,callingUid},
    emphstyle=[3]{\color{darkorange}\bfseries},
    emph=[4]{SYSTEM_UID,ROOT_UID,PERMISSION_GRANTED},
    emphstyle=[4]{\color{Brown}\bfseries},
}          

\node {
\begin{lstlisting}  % Start your code-block

// Entry point with correct authorization checks
boolean hasBaseUserRestriction(String key, int userId) {
  checkManageUsersPermission("hasBaseUserRestriction");
  // Unique check without a SecurityException
  if (!UserRestrictionsUtils.isValidRestriction(key))
    return false;
  ...}
  
// Entry point missing checkManageUsersPermission
boolean hasUserRestriction(String key, int userId) {
  if (!UserRestrictionsUtils.isValidRestriction(key))
    return false;
  ...}
  
void checkManageUsersPermission(String message) {
  if (!hasManageUsersPermission())
    throw new SecurityException();}

boolean hasManageUsersPermission() {
  int callingUid = Binder.getCallingUid();
  return UserHandle.isSameApp(callingUid, 
        Process.SYSTEM_UID)
    || callingUid == Process.ROOT_UID
    || ActivityManager.checkComponentPermission(
        "android.permission.MANAGE_USERS",
        callingUid, -1, true) == 
        PackageManager.PERMISSION_GRANTED;}

\end{lstlisting}
};
    
\end{tikzpicture}
}
\caption{Vulnerability found in \ttt{UserManagerService} by our tool}
\label{p1:fig:overview_code}
\end{figure}

Figure~\ref{p1:fig:overview_code} provides a motivating example for this paper, which contains a vulnerability discovered by \tool.
The simplified code snippet is from the User Manager Service, which provides core functionality similar to the Activity Manager and Package Manager Services.
In the figure, there are two entry points: \method{hasBaseUserRestriction} and \method{hasUserRestriction}.
The entry points perform very similar functionality, but have inconsistent authorization checks.
Specifically, \method{hasBaseUserRestriction} throws a \class{SecurityException} if the caller does not have the proper UID or permission.

This example is particularly apropos to \tool, because \method{hasUserRestriction} does not call any of the context queries considered by prior work~\cite{socq+16, ahs+18}. 
It also does not throw a \class{SecurityException}. 
Without knowledge that \method{isValidRestriction} is an authorization check, no form of consistency analysis could have identified that \method{hasUserRestriction} has a missing check.

\section{Overview}
\label{sec:overview}

The goal of this paper is to help a domain expert quickly identify and assess the impact of incorrect access control logic in implementations of system services in Android.
As with most nontrivial software systems, no ground truth specification of correctness exists.
Rather, the ``ground truth'' resides largely within the heads of the platform developers.
Prior literature has approached this type of problem by approximating correctness with consistency.
The intuition is that system developers are not malicious and that they are likely to get most of the checks correct.
The approach was first applied to security hook placement in the Linux kernel~\cite{ejz02,jez04,tzm+08} and more recently the Android platform~\cite{socq+16, ahs+18}.

Evaluating authorization check correctness via consistency analysis requires addressing the following challenges.
\begin{itemize}

\item \textit{Protected Operations:}
Nontrivial systems rarely have a clear specification of the functional operations that require protection by the access control system.
Protected operations range from accessing a device node to reading a value from a private member variable.
Axplorer~\cite{bbd+16} attempts to enumerate protected operations for Android; however, the specification is far from complete.

\item \textit{Authorization Checks:}
What constitutes an authorization check is vague and imprecise.
While some authorization checks are clear (e.g., those based on \class{checkPermission} and \class{getCallingUid}), many others are based on service-specific state and the corresponding authorization checks use a variety of method and variable names.

\item \textit{Consistency:}
The granularity and type of consistency impacts the precision and utility of the analysis.
While increasing the scope of relevant authorization checks increases the noise in the analysis,  not considering all authorization checks (as in Kratos~\cite{socq+16}) or using heuristics to determine relevancy (as in AceDroid~\cite{ahs+18}) raises the risk of not detecting vulnerabilities. 
\end{itemize}

\begin{figure}[t]
    \centering
    \scalebox{0.85}{
    \includegraphics[width=3.1in]{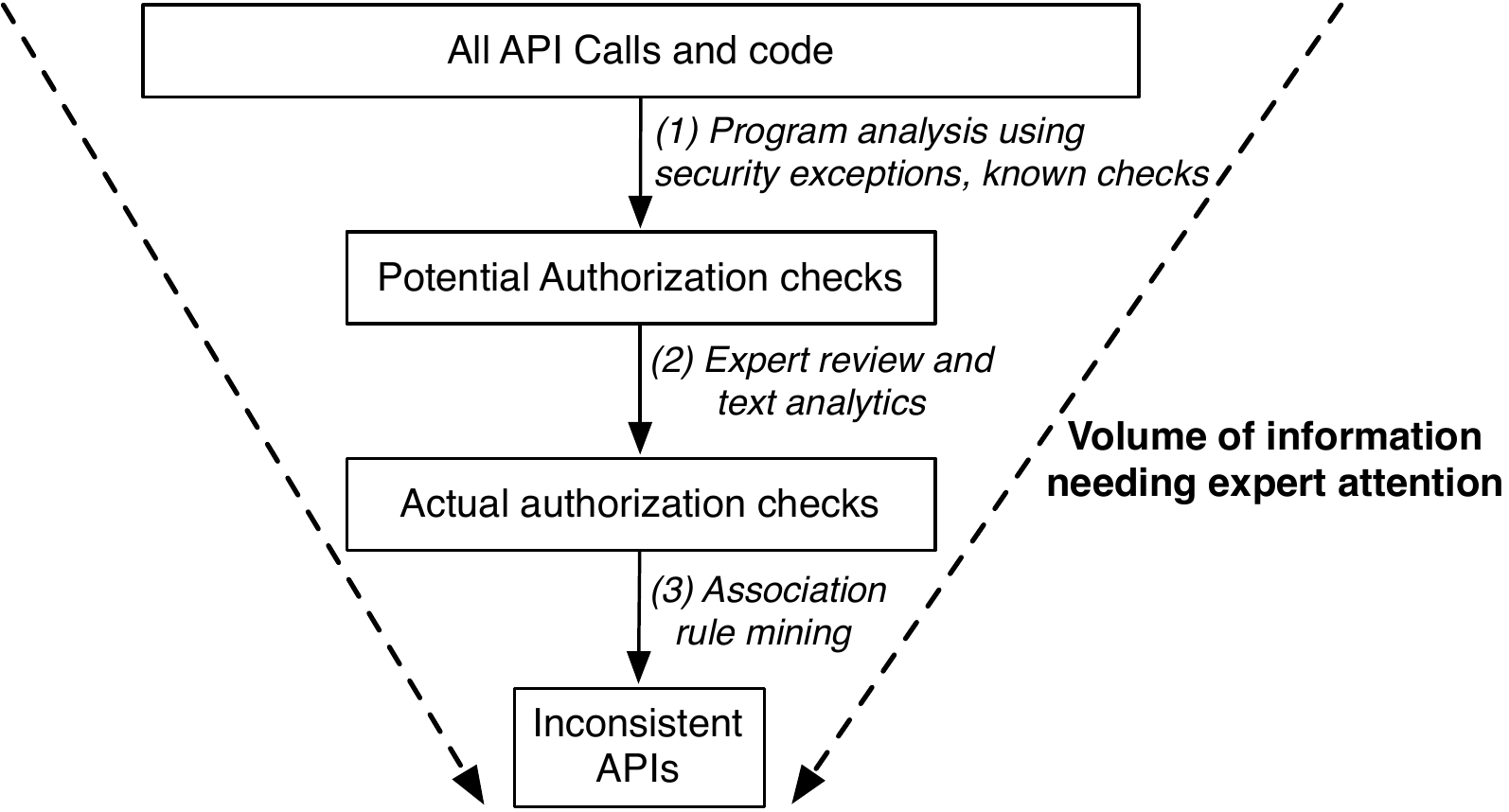}
    }
    \caption{Overview of \tool. At each stage, \tool significantly reduces the information an expert needs to analyze.}
    \label{fig:overview}
\end{figure}

\tool addresses these challenges through several novel insights.
First, \tool avoids the need to specify protected operations by considering program logic between service entry points and code that throws a \class{SecurityException}.
Our intuition is that if one control flow path leads to a \class{SecurityException}, an alternate control flow path leads to a protected operation.
Furthermore, the conditional logic leading to the \class{SecurityException} describes the authorization checks.
However, we found that not all authorization denials lead to a \class{SecurityException}, therefore, we also include entry points that contain known authorization checks.
Second, \tool semi-automatically discovers new authorization checks using a combination of static program analysis and textual processing.
More specifically, \tool identifies all of the method names, variable names, and strings that influence the conditional logic leading to a \class{SecurityException}.
The security-relevant values are manually refined and used to generate regular expressions that identify a broader set of authorization checks within service implementations.
Third and finally, \tool uses association rule mining for consistency analysis.
For each entry point, \tool uses static program analysis to extract a set of authorization checks.
Association rule mining compares the authorization check sets between entry points in the same service.
The analysis produces suggestions (called ``rules'') of how the sets should change to make them more consistent.
These rules include confidence scores that greatly aid domain experts when triaging the results.
This general approach is depicted in Figure~\ref{fig:overview}.

To more concretely understand how \tool operates, consider the discovery of the vulnerability shown in Figure~\ref{p1:fig:overview_code}. As a pre-processing step, \tool helps a domain expert semi-automatically identify authorization checks. First, \tool determines that the return value of \ttt{isValidRestriction} controls flow from the entry point \ttt{hasBaseUserRestriction} to a \ttt{SecurityException}. As such, this method name, along with many security irrelevant names are given to a domain expert. The domain expert then identifies security relevant terms (e.g. ``restriction``), which \tool consumes as part of a regular expression. Next, \tool mines the policy of the User Manager Service, extracting a policy for both \ttt{hasBaseUserRestriction} and \ttt{hasUserRestriction}, with the policy for \ttt{hasUserRestriction} only being extracted because \ttt{isValidRestriction} was identified as an authorization check. For each entry point, the policy is then encoded as a set of authorization checks (e.g. \ttt{isValidRestriction} and \ttt{ROOT\_UID} \ttt{==} \ttt{getCallingUid()}). Finally, \tool performs association rule mining to suggest set changes that make the policy more consistent. Such suggestions led us to discover the vulnerability in \method{hasUserRestriction}.

\section{Design}
\label{sec:design}

\tool is constructed on top of the Java static analysis framework Soot~\cite{vcg+99,lam2011soot} and has been largely parallelized so as to improve the run time of the complex analysis of Android's services.  
The design of \tool can be divided into three phases: (\S\ref{sec:mine-auth-check}) Mining Authorization Checks, (\S\ref{dsgn:refine}) Refining Authorization Checks, and (\S\ref{sec:const}) Inconsistency Analysis.

\subsection{Mining Authorization Checks}
\label{sec:mine-auth-check}

The first phase of \tool is mining authorization checks.
This phase consist of the following program analysis challenges: (\S\ref{sec:cg-const}) Call Graph Construction, (\S\ref{sec:id-auth-check}) Identifying Authorization-Check Statements, and (\S\ref{p1:sec:rep-auth-checks}) Representing Authorization Checks.

\begin{figure}[t]
    \centering
    \resizebox{0.9\columnwidth}{!}{
    \includegraphics{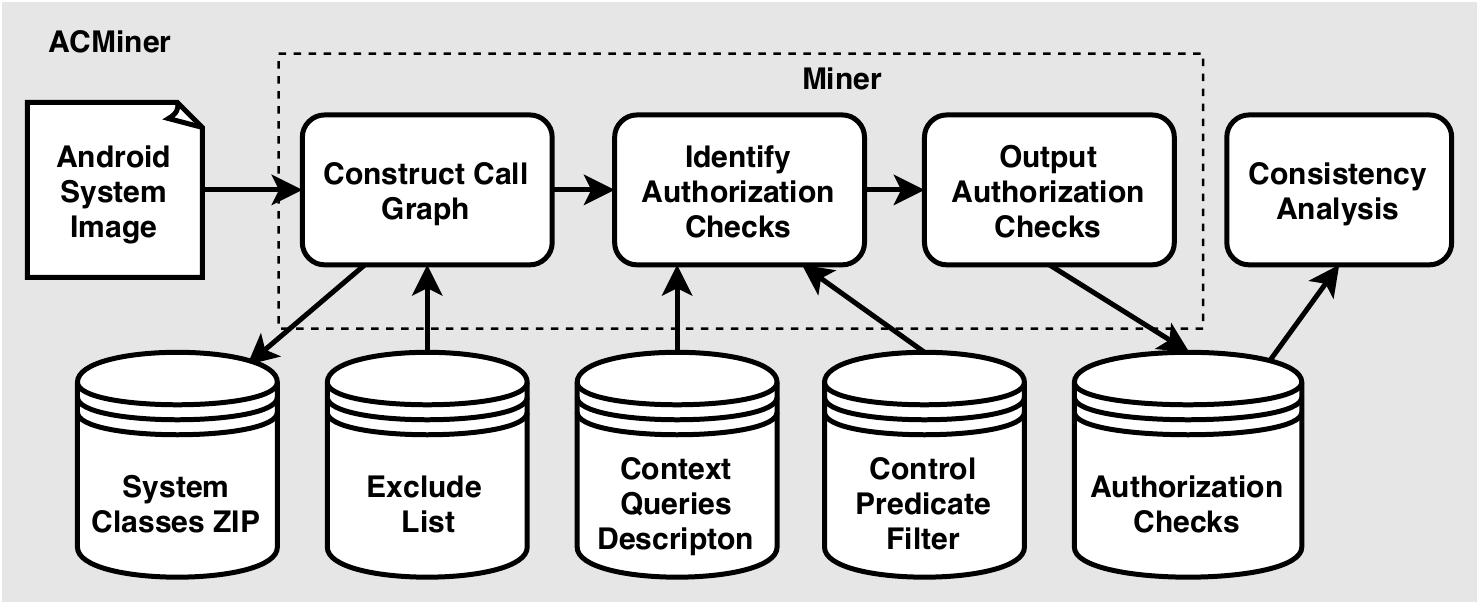}
    }
    \caption{\tool's processing stages and input files.}
    \label{fig:design_diagram}
\end{figure}

\subsubsection{Call Graph Construction}
\label{sec:cg-const}

Authorization checks are mined by traversing a call graph of the service implementation.
\tool constructs call graphs using the following three steps.

\myp{Extracting Java Class Files}
\tool extracts a \textit{.jar} containing all the class files of the Android middleware in Soot's Jimple format from Android's \textit{system.img}. This \textit{.jar} containing Jimple files is then used on all subsequent runs of \tool.
\begin{conference}
The implementation of this approach is detailed in the extended version~\cite{acminer-extended}.
\end{conference}
\begin{extended}
The implementation of this approach is detailed in Appendix~\ref{p1:app:extracting_java_class}.
\end{extended}

\myp{Extracting System Services and Entry Points}
Android's middleware is composed of isolated services that communicate through predefined \ttt{Binder} boundaries.
This division allows \tool to analyze each service separately, by defining each service by the code reachable through its \ttt{Binder} entry points. \tool extracts system services and their entry points similar to prior work~\cite{azhl12,socq+16,ahs+18}. 
\begin{conference}
For implementation details, please see the extended version~\cite{acminer-extended}.
\end{conference}
\begin{extended}
For implementation details, please see Appendix~\ref{p1:app:extracting_java_class_2}.
\end{extended}

\myp{Reducing the Call Graph}
\tool constructs a call graph representing all possible transitive calls from the entry points.
\tool uses the Class Hierarchy Analysis (CHA)~\cite{dgc95}, which is guaranteed to provide an over-approximation of the actual runtime call graph.
In contrast, Kratos and AceDroid use other potentially more accurate call graph builders (i.e., SPARK~\cite{lh03} and its WALA equivalent), which use points-to analysis to construct a less complete 
under-approximation of the runtime call graph.
The loss of completeness occurs when constructing call graphs for libraries and other Java code without main methods.
Therefore, it is important to note that unlike the prior work, \tool is more complete and guaranteed to include all paths containing authorization checks.

Since CHA call graphs are coarse over-approximations of the runtime call graph, \tool must apply heuristics to mitigate call graph bloat.
When resolving targets for method invocations, CHA considers every possible implementation of the target method whose declaring type is within the type hierarchy of the call's receiver type.
If the invoked method is defined in a widely used interface or superclass, the resolution may identify hundreds of targets for a single invocation.
Thus, the resulting CHA call graph for the Android middleware is too large to be analyzed in a reasonable amount of time and memory~\cite{lhotak2007comparing}.

To mitigate call graph bloat, we manually defined a list of classes and methods to exclude from the analysis, which become cutoff points in the call graph.
We ensured that the class or method subject to exclusion did not contain, lead to, or was used in an authorization check.
While the exclude list may require revision for newer versions of AOSP or modifications made by vendors, the creation of the exclude list is a largely one-time effort.
\begin{conference}
Please see the extended version~\cite{acminer-extended}
\end{conference}
\begin{extended}
Please see Appendix~\ref{p1:app:appendix_excludeList}
\end{extended}
for a detailed description of the exclusion procedure and our website~\cite{acminer-website} for a full list of excluded classes/methods.

Finally, when analyzing an entry point, \tool treats all other entry points as cutoff points in the call graph. This decision further reduces call graph bloat.
Unfortunately, it also introduces unsoundness into the call graph, which we discuss in Section~\ref{p1:sec:limitations}.

\subsubsection{Identifying Authorization Check Statements}
\label{sec:id-auth-check}

Once \tool has the call graph for all entry points, the next step is to identify authorization checks.
As described in Section~\ref{sec:motivation}, the complete set of authorization checks is unknown.
\tool takes a two pronged approach to identifying authorization checks.
First, it identifies all possible checks leading to a protected operation (this section).
Second, it refines the list of possible authorization checks based on code names and string values (Section~\ref{dsgn:refine}).
To describe this process, we must first define a \textit{\cp}.

\begin{defn}[\textit{\CP}]\label{defn:cps}
A conditional statement (i.e., an \ttt{if} or \ttt{switch} statement) whose logic authorizes access to some protected operation.
\end{defn}

Identifying protected operations is nontrivial, as they may range from accessing a device node to returning a private member variable.
However, even if we knew the protected operations, we would still need to determine which conditional statements are \cps.
\tool uses the key observation that Android frequently throws a \class{SecurityException} when access is denied.
Therefore, \tool marks all conditional statements on the control flow path between entry points and the statement throwing the \class{SecurityException} as \emph{potential} \cps.

While throwing a \class{SecurityException} is the most common way Android denies access, it is not the only way.
Some entry points deny access by returning false or even by returning empty data.
Such denials are not easily identifiable.
Fortunately, as shown in Figure~\ref{p1:fig:overview_code}, Android often groups authorization checks into methods to simplify placement. We refer to such groups of authorization checks as \textit{\cqs}.

\begin{defn}[\textit{\CQ}]\label{defn:cqs}
A method consisting entirely of a set of \textit{\cps}, calls to other \textit{\cqs}, and/or whose return value is frequently used as part of a \textit{\cp}.
\end{defn}

By identifying \textit{\cqs}, \tool can mark \cps not on the path between a entry point and a thrown \class{SecurityException}, thereby making the authorization check mining more complete.
As shown in Figure~\ref{fig:design_diagram}, \tool is configured with a input file that specifies \cqs.
Our method for defining this input is described in Section~\ref{dsgn:refine}.

Using these insights, \tool identifies authorization checks with fairly high accuracy.
The identification procedure is as follows.
First, \tool marks all conditional statements inside a \cq and the subsequent transitive callees as \cps for the entry point.
Next, \tool performs a backwards inter-procedural control flow analysis from each statement throwing a \class{SecurityException} and each \cq invocation to the entry point.
During this backwards analysis, \tool marks all conditional statements on the path as potential \cps.
Finally, to capture \cps that occur without a \class{SecurityException}, \tool performs a forward inter-procedural def-use analysis on the return value of a \cq.
During this analysis, \tool marks any conditional statement that uses the return value as a potential \cp.
Note, \tool does not currently track the return value through fields, as this was found to be too noisy.
However, \tool can track the return value through variable assignments, arithmetic operations, array assignments, and the passed parameters of a method invoke.

\subsubsection{Representing Authorization Checks}
\label{p1:sec:rep-auth-checks}

\tool represents the authorization checks for each entry point as a set of \cqs and \cps.
We initially represented authorization checks as boolean expressions representing the control flow decisions that lead to a thrown \class{SecurityException} or invoked \cq.
This representation would allow \tool to verify the existence, order, and the comparison operators of the authorization checks.
However, for complex services (e.g., the Activity Manager) this representation was infeasible due to the large number of authorization checks.
Additionally, we found that without more complex context-sensitive analysis, \tool could not extract authorization checks involving implicit flows.
Therefore, we simplified our consistency analysis to only consider \emph{the existence} of an authorization check for an entry point.
This approach requires two reasonable correctness assumptions:
(1) all authorization checks have been placed and ordered correctly, and
(2) all \cps have the correct comparison operator. 
\tool cannot detect violations of these assumptions.

The existence of authorization checks is easily represented as a set; however some processing is required.
For each variable in a authorization check statement, \tool must substitute all possible values for that variable.
More specifically, for each \cps and \cqs statement (i.e., conditional or method invoke statement), \tool must generate a list containing the product of all the possible variables and the values for each variable.
To reduce redundant output, \tool only computes the product for \cqs that do not have a return value or whose return value is not used in a \cps.

For this expansion, \tool applies an inter-procedural def-use analysis to each variable used in a statement, thereby obtaining the set of all possible values for that variable from the entry point to this specific use site.
It then computes the product of these sets to achieve the complete set of authorization checks for a single statement.
If a variable is assigned a value from the return of a method call, \tool does not consider the possible return values of the method, but instead includes a reference to all the possible targets of the method call.
Similarly, if a variable is assigned a value from the field of a class or an array, \tool includes only a reference to the field or array instead of all the possible values that may be assigned to the field or array.
As such, while the list of values largely consists of constants, it may also contain references to fields, methods and arrays.
The resulting combination of all the iterations of values for each \cp and each required \cq of an entry point makes up the final set of authorization checks output by \tool.

The resulting set has the potential to be exponentially large.
To prevent this growth and to remove noise in \tool's output, we apply several simplifications to the authorization checks 
\begin{conference}
(see the extended version~\cite{acminer-extended}).
\end{conference}
\begin{extended}
(see Appendix~\ref{p1:app:appendix_authCheckSimplify}).
\end{extended}
These simplifications are designed to increase the number of authorization checks that are equivalent from a security standpoint and in no way affect the completeness of the authorization checks.

\subsection{Refining Authorization Checks}
\label{dsgn:refine}

The techniques in Section~\ref{sec:id-auth-check} identify \emph{potential} \cps; however, not all conditional statements are authorization checks.
Performing consistency analysis at this point would be infeasible due to the excessive noise in the data.
Therefore, \tool uses a one-time, semi-automated method to significantly reduce noise.

Our key observation is that Section~\ref{sec:mine-auth-check} over approximates authorization checks on the path from entry points to a thrown \class{SecurityException} or \cq.
From this over-approximation, \tool can generate a list of all the strings and fields used in the conditional statements, a list of the methods whose return values are used in the conditional statements, and the methods containing the conditional statements. 
These values can be manually classified as authorization-related or not.
The general refinement procedure is as follows:
(1) a domain expert filters out values not related to authorization;
(2) the refined list is translated into generalized expressions;
(3) \tool uses the generalized expressions to automatically filter out values not related to authorization;
(4) the generalized expressions are refined until the automatically generated list is close to that defined by the domain expert.
While creating generalized expressions is time consuming, they can be used to analyze multiple Android builds with minimal modifications.

The specific refinement procedure is divided into two phases:
(\S\ref{sec:refine-cq}) identifying additional \cqs, and (\S\ref{refine-cp}) refining \cp identification.

\begin{table}[t!]
\caption{\label{table:cqs-org}Initial List of \textit{\CQS}}
\vspace{-0.5em}
\scriptsize
\scalebox{1}{
\rowcolors{3}{black!10}{}
\begin{tabular}{|p{2.8cm}|p{3.2cm}|}
\hline
\multicolumn{1}{|>{{\centering\arraybackslash}}c|}{\textbf{Classes}} & \multicolumn{1}{>{{\centering\arraybackslash}}c|}{\textbf{Methods}} \tabularnewline \hline
\tttscript{Context} \newline  \tttscript{ContextImpl} \newline \tttscript{ContextWrapper} & \tttscript{enforcePermission} \newline \tttscript{enforceCallingPermission} \newline \tttscript{enforceCallingOrSelfPermission} \newline \tttscript{checkPermission} \newline \tttscript{checkCallingPermission} \newline \tttscript{checkCallingOrSelfPermission} \tabularnewline 
\tttscript{AccountManagerService} & \tttscript{checkBinderPermission} \tabularnewline 
\tttscript{LocationManagerService} & \tttscript{checkPackageName} \tabularnewline
\tttscript{IActivityManager} \newline \tttscript{ActivityManagerProxy} \newline \tttscript{ActivityManagerService} & \tttscript{checkPermission} \tabularnewline
\tttscript{ActivityManagerService} \newline \tttscript{ActivityManager} & \tttscript{checkComponentPermission} \newline \tttscript{checkUidPermission} \tabularnewline
\tttscript{IPackageManager} \newline \tttscript{IPackageManager\$Stub\$Proxy} \newline \tttscript{PackageManagerService} & \tttscript{checkUidPermission} \newline \tttscript{checkPermission} \tabularnewline
\hline
\end{tabular}
}
\end{table}

\subsubsection{Identifying Additional \CQS}
\label{sec:refine-cq}

\tool uses \cqs as indicators of the existence of authorization checks when no \class{SecurityException} is thrown.
Our initial list of \textit{\cqs}, shown in Table~\ref{table:cqs-org}, was very limited and only contained \pInitialContextQueries{} methods.
To expand this list we performed the following steps:
(1) run \tool as described in Section~\ref{sec:mine-auth-check} using the initial list of \textit{\cqs}, 
(2) from the marked conditional statements, generate a list of the methods containing these conditional statements and the methods whose return values are used in these conditional statements, 
(3) have a domain expert inspect the list and identify methods that match our definition of a \cq, and add these to our list of \cqs, and
(4) repeat steps 1$\rightarrow$3 until no new \cqs are added to the list.
For Android AOSP 7.1.1, this procedure took about 48 hours and increased the number of \textit{\cqs} to \pManualContextQueries{} methods. 

To make this list reusable, we translate it into a set of generalized expressions that describe \cqs across different Android versions.
The expressions consist of regular expressions and string matches for the package, class, and name of a method, and also include conditional logic. An example expression is shown in 
Figure~\ref{p1:fig:design_cqExprExample}.
Overall, we defined 49 generalized expressions to describe \cqs for Android AOSP 7.1.1, which took $<$10 hours. The expressions enabled \tool to identify an additional \pTotalDiffManualContextQueries{} methods, resulting in a total of \pTotalContextQueries{} \cqs. Please see 
\begin{conference}
the extended version~\cite{acminer-extended}
\end{conference}
\begin{extended}
Appendix~\ref{p1:app:appendix_repCQS}
\end{extended}
for details on the translation procedure and our website~\cite{acminer-website} for the expression-list. 

\begin{figure}
\centering
\resizebox{0.65\columnwidth}{!}{
\begin{tikzpicture}

\lstset{
    showspaces=false,
    showtabs=false,
    breaklines=false,
    showstringspaces=false,
    breakatwhitespace=true,
    tabsize=2,
    captionpos=b,
    commentstyle=\bfseries\color{gray},
    keywordstyle=\bfseries\color{blue},
    stringstyle=\color{red}\bfseries,
    basicstyle=\ttfamily\footnotesize,
    morekeywords={and,or,not,starts-with-package,regex-name-words,equal-package},
    morestring=[b]`,
    alsoletter={-,.},
    comment=[l]{//},
    emph=[1]{android.,com.android.,android.test},
    emphstyle=[1]{\color{red}\bfseries},
}          

\node {
\begin{lstlisting}  % Start your code-block

(and (or (starts-with-package android.) 
         (starts-with-package com.android.))
     (regex-name-words `^(enforce|has|check)\s
         ([a-z\s]+\s)?permission(s)?\b`) 
         (not (equal-package android.test))...))

\end{lstlisting}
};

\end{tikzpicture}
}
\vspace{-0.5em}
\caption{Example expressions to describe \cqs.}
\label{p1:fig:design_cqExprExample}
\end{figure}

\subsubsection{Refining \CP Identification}
\label{refine-cp}

To refine the over-approximation of authorization checks, \tool again uses a semi-automated method of refinement, this time for \cps.
The process begins by running \tool with the refined \cqs from Section~\ref{sec:refine-cq}.
The expert then inspects lists of strings, fields, and methods for the \emph{potential} \cps, removing those not related to authorization.

From our exploration, we discovered a number of different categories of \cps. Some we were aware of such as those involving \textit{UID}, \textit{PID}, \textit{GID}, \textit{UserId}, \textit{AppId}, and package name.
However, even within these categories, we discovered new fields, methods, and contexts in which checks are performed.
We also discovered previously unknown categories of \cps including those: (1) involving \ttt{SystemProperties}, (2) involving flags, (3) performing permission checks using the string equals method instead of the standard check permission methods, (4) checking for specific intent strings, and (5) checking boolean fields in specific classes.
Finally, we discovered that a significant amount of noise was generated by the conditional statements of loops and sanity checks such as \textit{null} checks. Using all of the information gained from the exploration of elements related to possible \textit{\cps}, we defined a filter that refines \cp and reduces the overall noise. 

Overall, the exploration took about 56 hours.
We defined a 41-rule filter in about 16 hours (see our website~\cite{acminer-website} for the actual filter and
\begin{conference}
the extended version~\cite{acminer-extended}
\end{conference}
\begin{extended}
Appendix~\ref{p1:app:appendix_definingCPSFilter}
\end{extended}
for the specification process).
The application of the filter for Android AOSP 7.1.1 reduced what \tool considered to be \cps from \pInitialControlPredicates{} to \pTotalControlPredicates{}. Such a significant reduction not only underscores the need for a filter but also makes the consistency analysis (Section~\ref{sec:const}) more feasible.

\subsection{Consistency Analysis}
\label{sec:const}

The final step of \tool is consistency analysis of authorization checks for each entry point.
In this paper, we perform consistency analysis of all entry points within a service.
However, the methodology may work across multiple services, or even across different OEM firmwares.
\tool performs consistency analysis by automatically discovering underlying correlative relationships between sets of authorization checks.
Specifically, \tool adopts a targeted approach for association rule mining by using constraint-based querying.
It then uses these association rules to predict whether an entry point's authorization checks are consistent.
The results are presented to a domain expert for review.

Figure~\ref{p1:fig:overview_association_rule} shows an example association rule generated by \tool from the code in Figure~\ref{p1:fig:overview_code}.
$X$ and $Y$ are sets of authorization checks found in entry points. The rule states that if an entry point has check(s) from the set $X$, then it must also have the check(s) in set $Y$.
\tool then uses these generated rules to identify potential vulnerabilities by reporting entry points that violate the learned rules (i.e., if an entry point has all of the checks in $X$, but it is missing checks in $Y$, then a violation occurs).
For instance, Figure~\ref{p1:tbl:overview_association_rule} shows the three entry points that match $X$ for the rule in Figure~\ref{p1:fig:overview_association_rule}, out of which \method{hasUserRestriction} fails to satisfy the rule (it does not contain \method{checkManageUsersPermission}). 
On closer inspection, we discovered that all three entry points either set or get information about user restrictions.
Moreover, the functionality of \method{hasUserRestriction} is nearly identical to \method{hasBaseUserRestriction}, which suggests \method{checkManageUsersPermission} is needed.
As seen in these examples, \tool allows an expert to only compare entry points that are close 
in terms of their authorization checks, which is
more precise than comparing all entry points to one another.

The remainder of this section discusses \tool's approach to efficiently discover these association rules and how \tool uses them to detect inconsistencies in authorization checks.

\begin{figure}
\centering
\begin{subfigure}[b]{.5\columnwidth}
  \centering
  \setlength{\abovecaptionskip}{6pt}
  \setlength{\belowcaptionskip}{0pt}
  \scalebox{0.9}{
\begin{tikzpicture}[
  auto,
  block/.style={
    rectangle,
    draw,
    fill=white,
    line width=0.4pt,
    inner sep=0
  },
  line/.style ={draw, line width=0.3mm, -latex,shorten >=0.35cm}
]

\draw node[block,label={\small \textbf{X}}] (X) {
    \scriptsize
    \begin{tabular}{l}
    \rule{0pt}{0.9\normalbaselineskip}\tttscript{isValidRestriction(String)}\\[0.1\normalbaselineskip]
    \end{tabular}
};

\draw (X) (0,-1.65) node[block,label={\small \textbf{Y}}] (Y) {
    \scriptsize
    \begin{tabular}{p{4cm}}
    \rule{0pt}{0.9\normalbaselineskip}\tttscript{hasManageUsersPermission()}\\[0.1\normalbaselineskip] \hline
    \rule{0pt}{0.9\normalbaselineskip}\tttscript{checkComponentPermission(MANAGE\_USERS,} \tttscript{getCallingUid(),} \tttscript{-1,} \tttscript{true)}\\[0.2\normalbaselineskip] \hline
    \rule{0pt}{0.9\normalbaselineskip}\tttscript{ROOT\_UID} \tttscript{==} \tttscript{getCallingUid()}\\[0.1\normalbaselineskip] \hline
   \rule{0pt}{0.9\normalbaselineskip}\tttscript{isSameApp(SYSTEM\_UID,} \tttscript{getCallingUid())}\\[0.1\normalbaselineskip]
    \end{tabular}
};

\draw[line] (X.south) +(0,0) -- (Y.north);

\end{tikzpicture}
}
  \caption{Association Rule}
  \label{p1:fig:overview_association_rule}
\end{subfigure}%
\begin{subfigure}[b]{.5\columnwidth}
  \centering
\scalebox{1}{
\scriptsize
\rowcolors{3}{black!10}{}
\begin{tabular}{|l|c|}
\hline
\multicolumn{0}{|c|}{\textbf{API}} & \multicolumn{1}{c|}{\textbf{X $\rightarrow$ Y}}
\tabularnewline \hline
\tttscript{hasBaseUserRestriction} & \checkmark 
\tabularnewline
\tttscript{setUserRestriction} & \checkmark 
\tabularnewline
\tttscript{hasUserRestriction} & \text{\sffamily X}
\tabularnewline \hline
\end{tabular}
}

  \caption{Entry Points For Rule}
  \label{p1:tbl:overview_association_rule}
\end{subfigure}
\vspace{-1em}
\caption{(a) shows an association rule generated from the code in Figure~\ref{p1:fig:overview_code} and (b) illustrates how the first 2 entry points satisfy the rule, while \ttt{hasUserRestriction} does not, indicating it contains one or more inconsistent authorization checks.} 
\end{figure}
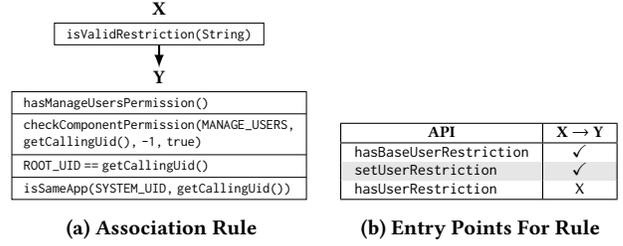

\subsubsection{Association Rule Mining}
\label{subsubsec:arm}

Association rule mining discovers correlative relationships between sets of authorization checks, $A = \{i_{1}, i_{2},\cdots,i_{n}\}$, across a set of entry points, $E=\{t_{1}, t_{2},\cdots, t_{m}\}$ where each entry point in $E$ contains a subset of the items in $A$.
 An association rule takes the form $X \implies Y$ where $X$ (antecedent) and $Y$ (consequent) are sets of authorization checks and $X$ and $Y$ are disjoint, i.e., $X \subseteq A$ and $Y \subseteq A$ and $X \cap Y = \emptyset$.

To avoid an excessive number of association rules, \tool uses two statistical constraints (support and confidence) to remove candidate association rules that are less than the thresholds minimum support (\textit{minsup}) and minimum confidence (\textit{minconf}).
Let $\alpha(I)$ represent the set of entry points in $E$ that contain the authorization checks $I \subseteq A$, i.e., $ \alpha(I) = \{ t \in E \mid \forall i \in I, i \in t \}$. The support of an association rule $X \implies Y$ is the probability that a set of authorization checks $Z = X \cup Y$ appears in the set of transactions $E$, i.e.,  $\sigma(Z) = \frac{|\alpha(Z)|}{|E|}$. The confidence of an association rule is an estimate of the conditional probability of the association rule $P(Y|X)$ where $X \implies Y$ and can be calculated as $\textit{conf}(X \implies Y) = \frac{\sigma(X \cup Y)}{\sigma(X)}$.

While association rule mining has been applied to similar problems by prior work~\cite{hxha10}, the large transaction size (i.e., number of authorization checks in an entry point) in our problem domain makes general association rule mining algorithms infeasible due to their exponential complexity. 
Therefore, \tool uses two main optimizations to reduce the complexity to polynomial time.

First, \tool only generates a subset of the association rules called closed association rules~\cite{szathmary06}. An association rule $X \implies Y$ is closed if $X \cup Y$ is a frequent closed itemset.  A frequent closed itemset is a set of authorization checks $C \subseteq A$ where the support of $C$ is greater than \textit{minsup} and there does not exist a superset $C'$ that has the same support as $C$. $C$ is closed iff $\beta(\alpha(C)) = C$ where $\beta(T)$ represents the largest set of common authorization checks present in the entry points $T$ where $T \subseteq E$, i.e., $\beta(T) = \{ i \in A \mid \forall t \in T, i \in t \}$. Note that only mining frequent closed itemsets is loss-less, because all frequent itemsets can be generated from the set of frequent closed itemsets, as proven by Zaki and Hsiao~\cite{ZH02}. Our proof that closed association rules also do not result in information loss can be found in
\begin{conference}
the extended version~\cite{acminer-extended}.
\end{conference}
\begin{extended}
Appendix~\ref{p1:app:appendix_proof}.
\end{extended}

Second, \tool generates closed association rules in a targeted manner by placing constraints on the authorization checks that appear in the antecedent of the association rule. Since the goal of consistency analysis is to predict whether an entry point's implementation of authorization checks is consistent, we are only interested in generating association rules where the antecedent of the association rule is a subset of the entry point's authorization checks (i.e., $X \subseteq A_{j}$ where $A_{j}$ is the authorization checks of entry point $e_{j}$). For example, consider $A_{j} = \{i_{1}, i_{2}, i_{3}\}$ and we have two frequent closed itemsets $\{i_{1}, i_{2}, i_{3}, i_{4}\}$ and $\{i_{5}, i_{6}, i_{7}\}$. The association rule $\{i_{1}, i_{2}, i_{3} \implies i_{4}\}$ is useful, as it could potentially hint that the authorization checks in $A_{j}$ is inconsistent and should also contain $i_{4}$. However, all of the association rules from the set $\{i_{5}, i_{6}, i_{7}\}$ do not provide additional information on the consistency of authorization checks in $A_{j}$, as the two sets are disjoint.

Further, assuming that the authorization checks that are present within an entry point are correct, we can force the antecedent to be constant. In particular, when generating association rules from a frequent closed itemset $I$ for an entry point $A_{j}$, we set $X = A_{j} \cap I$ and can generate association rules by varying the items in $Y$. If we reduce the authorization checks in $X$, then we are making the rule less relevant to the consistency of the entry point $A_{j}$ while keeping $X$ constant only produces the most relevant association rules.

\subsubsection{Inconsistency Detection and Output Generation}
\label{subsubsec:output}

\tool uses the \glspl{arule_p1} discussed in Section~\ref{subsubsec:arm}
to analyze the consistency of an entry point's authorization checks.
To minimize the amount of manual effort required to verify the presence of an inconsistency, we ensure the output only contains high confidence rules by setting \textit{minconf} to 85\%. Moreover, as we want the authorization checks in the consequent of an association rule to be formed by at least 2 entry points, we set the \textit{minsup} to $\frac{2}{|E|}$.

While generating the \glspl{arule_p1}, we mark an entry point's authorization checks as consistent if there exists a frequent closed itemset that contains the exact same authorization checks as the entry point, as this hints that the entry point's authorization checks are consistent with another entry point's authorization checks. In particular, entry point $e_{j}$'s authorization checks $A_{j}$ is consistent iff $\exists C \in A | C = A_{j} \wedge \beta(\alpha(C)) \wedge \sigma(C) \geq \frac{2}{|E|}$. In contrast, we mark an entry point's authorization checks as potentially inconsistent if an \gls{arule_p1} exists where the entry point's authorization checks are the antecedent of the rule and the consequent is not empty (i.e., $\exists X \implies Y | X \subseteq A_{j} \wedge Y \neq \emptyset$).

\tool outputs an HTML file for the domain expert to review
for each \gls{arule_p1} representing a potentially inconsistent entry point.
The HTML file contains the set of \glspl{supportac_p1} for the \gls{arule_p1} (i.e. the antecedent), the set of authorization checks being recommended by the \gls{arule_p1} (i.e. the consequent), and the 3 or more entry points that contain the authorization checks of the \gls{arule_p1}. This group of entry points can be subdivided into two sets: the \gls{target_p1} and the \glspl{support_p1}. The \gls{target_p1} is the entry point the \gls{arule_p1} has identified as being inconsistent, i.e., the entry point the \gls{arule_p1} is recommending additional authorization checks for. The \glspl{support_p1} are the 2 or more entry points where the \glspl{recommend_p1} occur. Note that the \glspl{supportac_p1} occur in both the target and the \glspl{support_p1}.
To aid the review, 
the HTML file also contains the set of all authorization checks from the \gls{target_p1} that do not occur in the \glspl{supportac_p1} and for all authorization checks, the method in the Android source code where the checks occur.

To reduce the manual effort required to confirm inconsistencies, we perform two post-processing techniques. First, we remove \glspl{arule_p1} where $|$\textit{\glspl{recommend_p1}}$| >= 5 * |$\textit{\glspl{supportac_p1}}$|$, since \glspl{arule_p1} that contain 100 authorization checks which imply 500 authorization checks is improbable. Second, we remove any remaining \glspl{arule_p1} that have over 100 \glspl{recommend_p1} as such \glspl{arule_p1} are unlikely to indicate inconsistencies, and the domain expert may not be able to evaluate such rules in a reasonable amount of time.

\section{Evaluation}
\label{sec:eval}

We evaluated \tool by performing an empirical analysis of the system services in AOSP version 7.1.1\_r1 (i.e., API 25)  built for a Nexus 5X device. Our analysis was performed on a machine with an Intel Xeon E5-2620 V3 (2.40 GHz), 128 GB RAM, running Ubuntu 14.04.1 as the host OS, OpenJDK 1.8.0\_141, and Python 2.7.6. 

We used \tool to mine the authorization checks of all the entry points from this build of AOSP and perform consistency analysis, as described in Section~\ref{sec:design}. Finally, we manually analyzed the inconsistencies using a systematic methodology to identify high risk (i.e., easily exploitable) and high impact vulnerabilities, and developed proof-of-concept exploits to validate our findings. Our evaluation is guided by the following research questions:
\begin{enumerate}[label=\textbf{RQ\arabic*},ref=\textbf{RQ\arabic*}]
  \item \label{rq:triaging} {\em Does \tool reduce the effort required by the domain expert in terms of the entry points that need to be analyzed?}
  \item \label{rq:vuln} {\em Do the inconsistencies identified by \tool help a domain expert in finding security vulnerabilities?}
  \item \label{rq:causes} {\em What are the major causes behind inconsistencies that do not resolve to security vulnerabilities?}
  \item \label{rq:prior} {\em Is \tool more effective than prior work at detecting vulnerabilities in system services?}
  \item \label{rq:time} {\em What is the time required by \tool to analyze all the system services in a build of Android?}
\end{enumerate}

We now highlight the salient findings from our evaluation, followed by the categorization of the discovered vulnerabilities. The categorization of non-security inconsistencies, developed via a systematic manual analysis of our results, is described in Section~\ref{p1:sec:eval_otherInconsistencies}.

\subsection{Evaluation Highlights}
\label{p1:sec:eval_summary}

\begin{figure}
\centering
\begin{subfigure}{.5\columnwidth}
  \centering
  \setlength{\belowcaptionskip}{-12pt}
  \setlength{\abovecaptionskip}{2pt}
  \scalebox{.50}{
\begin{tikzpicture}[
    title/.style={font=\huge\bfseries\ttfamily},
    num/.style={font=\LARGE\bfseries\ttfamily}
]
\draw[ultra thick,black,step=3cm] (0,0) rectangle (6,6);
\draw[ultra thick,black,step=3cm] (1,0) rectangle (6,4.5);
\draw[ultra thick,black,step=3cm] (2,0) rectangle (6,3);
\node[num,align=center] at (3,5.25) {\pEpTotal{} Total Entry Points};
\node[num,align=center] at (3.5,3.75) {\pEpWithLogic{} Entry Points with\\ Authorization Logic};
\node[num,align=center] at (4,2) {\pEpWithAtLeastOneRule{} Entry Points\\ with Association\\ Rule(s)};

\end{tikzpicture}
}
  \caption{Entry Point Reduction}
  \label{p1:fig:entry_points_reduce}
\end{subfigure}%
\begin{subfigure}{.5\columnwidth}
  \centering
  \setlength{\abovecaptionskip}{2pt}
  \setlength{\belowcaptionskip}{0pt}
  \scalebox{.50}{
\begin{tikzpicture}[
    title/.style={font=\huge\bfseries\ttfamily},
    num/.style={font=\LARGE\bfseries\ttfamily}
]
\draw[ultra thick,black,step=3cm] (0,0) grid (6,6);
\node[num,text width=2.5cm,align=center] at (1.5,4.5) {\pHighRiskHighImpactEp{} Entry Points};
\node[num,text width=2.5cm,align=center] at (4.5,4.5) {\pHighRiskEp{} Entry Point};
\node[num,text width=2.5cm,align=center] at (1.5,1.5) {\pHighImpactEp{} Entry Points};
\node[num,text width=2.5cm,align=center] at (4.5,1.5) {\pLowRiskLowImpactEp{} Entry Points};
\node[title,rotate=90] at (-0.5,4.5){High Risk};
\node[title,rotate=90] at (-0.5,1.5){Low Risk};
\node[title] at (1.5,6.5){High Impact};
\node[title] at (4.5,6.5){Low Impact};
\end{tikzpicture}
}
  \caption{Risk vs. Impact}
  \label{p1:fig:eval_risk_vs_impact}
\end{subfigure}
\caption{(a) shows how \tool is able to reduce the scope of the AOSP 7.1.1 system code a domain expert needs to evaluate and (b) breaks down the \pTotalRiskImpactEp{} vulnerabilities in terms of risk and impact.} 
\end{figure}
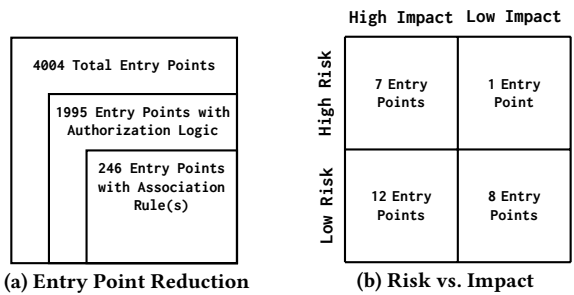

As shown in Figure~\ref{p1:fig:entry_points_reduce}, \tool reduces the total number of entry points that need to be manually analyzed down to just \pEpWithAtLeastOneRule{} entry points with inconsistent authorization checks, a \pEpReduction{}\% reduction (\ref{rq:triaging}). As a result, \tool significantly enhances a domain expert's ability to evaluate the consistency of access control enforcement in the Android system by minimizing the effort required. 
Further, \tool took approximately 1 hour and 16 minutes to mine the authorization checks of all entry points from the system image of the AOSP build, and spent an additional 30 minutes producing the HTML files for the \glspl{arule_p1} that represent potentially vulnerable entry points. While \tool could be optimized further, time taken by \tool is a feasible cost, given its scalability benefits over a fully manual analysis (\ref{rq:time}). 

On manually analyzing the \pEpWithAtLeastOneRule{} entry points, we discovered a total of \pTotalRiskImpactEp{} entry points containing security vulnerabilities (\ref{rq:vuln}). As Figure~\ref{p1:fig:eval_risk_vs_impact} illustrates, these \pTotalRiskImpactEp{} vulnerabilities were then classified in terms of their risk (i.e., the ease of exploiting a vulnerability) as well as the impact (i.e., the gravity of the consequence of an exploited vulnerability). Using this criteria, we found \pHighRiskHighImpactEp{} vulnerabilities that were high risk as well as high impact, \pHighRiskEp{} vulnerability that was high risk only, \pHighImpactEp{} vulnerabilities that were high impact only, and \pLowRiskLowImpactEp{} vulnerabilities that were low in terms of both risk and impact. All \pTotalRiskImpactEp{} vulnerabilities have been submitted to Google. So far, 2 of our vulnerabilities have been assigned a ''moderate'' Android Security Rewards (ASR) severity level, which is generally awarded to bypasses in access control mechanisms (e.g., restrictions on constrained processes, or general bypasses of privileged processes~\cite{asr}). In Section~\ref{p1:sec:eval_findings}, we categorize these \pTotalRiskImpactEp{} vulnerabilities according to their effect; however, we only discuss a few of these vulnerabilities in depth, due to space constraints.

\tool is significantly more effective than prior work at identifying inconsistent authorization checks. For instance, \tool is able to identify \pTotalContextQueries{} unique \cqs using the semi-auto\-mated approach described in Section~\ref{sec:refine-cq}, a drastic \pPrecentIncreaseContextQueries{}\% improvement over the original \pInitialContextQueries{} \cqs that encompass a majority of the \cqs considered by Kratos~\cite{socq+16}. Further, while AceDroid~\cite{ahs+18} is more comprehensive than Kratos in its identification of Android's authorization checks, it relies on a manually defined list of \cqs, which is insufficient. That is, as described in Section~\ref{sec:refine-cq}, our thorough attempts at identifying \cqs through manual observation alone resulted in the identification of only \pManualContextQueries{} \cqs, \pPrecentIdContextQueries\% of the total \cqs that \tool is able to find using its semi-automated approach. Thus, while AceDroid does not provide quantitative information on its set of \cqs, we can certainly say that it is not as complete as \tool in its identification of Android's authorization checks. 
Indeed, the \cq \ttt{isValidRestriction} in Figure~\ref{p1:fig:overview_code} is an example of a \cq that neither AceDroid nor Kratos was able to identify, and in fact, one that we missed in our manual definition of Android's authorization checks. However, through the general expressions, \tool was able to identify \ttt{isValidRestriction} as a \cq and the vulnerability outlined in Figure~\ref{p1:fig:overview_code}.  Moreover, neither AceDroid nor Kratos makes any mention of the App Ops restrictions in their definition of Android's authorization checks. Yet \tool is able to identify 2 vulnerabilities relating to the App Ops restrictions (see Section~\ref{p1:sec:eval_findings}). While a full empirical comparison with Kratos and AceDroid is infeasible due to the lack of source code access, our evaluation demonstrates that \tool makes significant advancements to existing work in terms of the coverage of the authorization checks, making the consistency analysis as complete as possible (\ref{rq:prior}).

Finally, \tool produced \pTotalAssociationRules{} \glspl{arule_p1} denoting inconsistent authorization checks in \pEpWithAtLeastOneRule{} entry points. Some entry points had more than one inconsistency. Furthermore, while some inconsistencies were indeed valid security vulnerabilities (\pTotalRiskImpactRule{}/\pTotalAssociationRules{}), others were a result of irregular coding practices in Android (\pTotalNonVulnerability{}/\pTotalAssociationRules{}) or indicative of \tool's limitations in terms of analyzing the semantics of the authorization checks (\ref{rq:causes}). The limitations identified via our analysis point to hard problems in analyzing Android's access control logic and motivate future work.


\subsection{Findings}
\label{p1:sec:eval_findings}

\begin{table*}[t]
\centering
\scriptsize
\caption{{Description of vulnerabilities, along with the services in which they are present}}
\label{tab:vuln}
\rowcolors{3}{black!10}{}
\def\arraystretch{1.2}
\begin{tabular}{|p{5cm}|p{12cm}|}
\hline
\multicolumn{0}{|c|}{\textbf{ Associated Entry Point (Service)}} & \multicolumn{1}{c|}{\textbf{Vulnerability Description}}
\tabularnewline \hline
\multicolumn{2}{c}{\textbf{VC1: Multi-user Enforcement}}
\tabularnewline \hline
1. \tttscript{getInstalledApplications} (\tttscript{PMS}) & Missing the \tttscript{enforceCrossUserPermission} check, allowing any app on one user profile to discover apps installed on other profiles.
\tabularnewline   
2. \tttscript{getPackagesHoldingPermissions} (\tttscript{PMS}) & Missing \tttscript{enforceCrossUserPermission}, allowing any app on one user profile to get sensitive permission information about other profiles.
\tabularnewline 
3. \tttscript{hasUserRestriction} (\tttscript{UMS}) & Missing the \tttscript{hasManageUsersPermission} check, which checks for the permission \tttscript{MANAGE\_USERS}, is missing, allowing any user to discover the restrictions on their own and other user profiles.
\tabularnewline
4. \tttscript{checkUriPermission} (\tttscript{AMS}) & Missing the \tttscript{handleIncomingUser} check that verifies if a user can operate on behalf of another, allowing any user access to content provider URIs belonging to another user, so long as the app making the request has access to the content provider.
\tabularnewline
5. \tttscript{grantUriPermission} (\tttscript{AMS}) & Missing the \tttscript{handleIncomingUser} check, with similar implications as \tttscript{checkUriPermission}.
\tabularnewline
6. \tttscript{killPackageDependents} (\tttscript{AMS}) & Missing the \tttscript{handleIncomingUser} check, allowing any user to kill the apps and background processes of another user.
\tabularnewline
7. \tttscript{setUserProvisioningState} (\tttscript{DPMS}) & Missing the \tttscript{enforceFullCrossUsersPermission} check, enabling any user to change another user profile's state.
\tabularnewline
8. \tttscript{setDefaultBrowserPackageName} (\tttscript{PMS}) & Missing the \tttscript{enforceCrossUserPermission} check, enabling any user to set the default browser of any other user. 
\tabularnewline
9. \tttscript{updateLockTaskPackages} (\tttscript{AMS}) & Missing \tttscript{handleIncomingUser}, enabling any user to modify the apps that may be permanently pinned to the screen in a kiosk like venue.
\tabularnewline
10. \tttscript{installExistingPackageAsUser} (\tttscript{PMS}) & Does not check if the target user exists, allowing any user to install apps on user profiles that may be created at a later time. 
\tabularnewline
11. \tttscript{setApplicationHiddenSettingAsUser} (\tttscript{PMS}) & Does not check if the target user exists, allowing any user to hide apps on user profiles that may be created in the future.
\tabularnewline
12. \tttscript{setAlwaysOnVpnPackage} (\tttscript{CS}) & Does not check for the \tttscript{no\_config\_vpn} user restriction, allowing a managed user to set its always on VPN to another application.
\tabularnewline
13. \tttscript{setWallpaperComponent} (\tttscript{WPMS}) & Missing the two checks \tttscript{isSetWallpaperAllowed} and \tttscript{isWallpaperSupported}, allowing a managed user to change their wallpaper.
\tabularnewline
14. \tttscript{startUpdateCredentialsSession} (\tttscript{ACMS}) & Missing checks \tttscript{canUserModifyAccounts} and \tttscript{canUserModifyAccountsForType}, allowing a user to trigger an update for the credentials of online accounts like Google and Facebook even when restricted. \setcounter{rownum}{0}
\tabularnewline \hline
\multicolumn{2}{c}{\textbf{VC2: App Ops}} 
\tabularnewline \hline
15. \tttscript{noteProxyOperation} (\tttscript{AOMS}) & Missing the \tttscript{verifyIncomingUid} check, which checks for a signature permission, allowing non-system apps to call this entry point.
\tabularnewline
16. \tttscript{getLastLocation} (\tttscript{LMS}) & A majority of the entry points in the \tttscript{LMS} use the \tttscript{AppOpsManager} check \tttscript{checkOp}, which is not intended for security, instead of the security check \tttscript{noteOp}. \tttscript{getLastLocation} uses the correct check. \setcounter{rownum}{0}
\tabularnewline \hline
\multicolumn{2}{c}{\textbf{VC3: Pre23}} 
\tabularnewline \hline
17. \tttscript{setStayOnSetting} (\tttscript{POMS}) & On systems with API 23 or above, the \tttscript{pre23} protection level allows any permission to be automatically granted to non-system apps built targeting the API 22 or below. This vulnerability allows non-system apps to access 6 additional entry points protected by the \tttscript{WRITE\_SETTINGS} \textit{signature} permission, as \tttscript{WRITE\_SETTINGS} also has the \tttscript{pre23} protection level. \setcounter{rownum}{0}
\tabularnewline \hline
\multicolumn{2}{c}{\textbf{VC4: Miscellaneous}} \tabularnewline \hline
18. \tttscript{unbindBackupAgent} (\tttscript{AMS}) & Missing check for if caller is performing a backup, allowing any app to disrupt the backup process of another app.
\tabularnewline
19. \tttscript{getPersistentApplications} (\tttscript{PMS}) & Missing system UID check, allowing any non-system app to discover what apps and services permantly run in the background.
\tabularnewline
20. \tttscript{logEvents} (\tttscript{MLS}) & Incorrect check for permission \tttscript{CONNECTIVITY\_INTERNAL}, should check \tttscript{DUMP} when writing sensitive data to logs.
\tabularnewline
21. \tttscript{getMonitoringTypes} (\tttscript{GHS}) & Missing check \tttscript{checkPermission}, allowing a caller access both fine and coarse levels of geofence location data.
\tabularnewline
22. \tttscript{getStatusOfMonitoringType} (\tttscript{GHS}) & Missing check \tttscript{checkPermission}, with similar implications as \tttscript{getMonitoringTypes}.
\tabularnewline
23. \tttscript{setApplicationEnabledSetting} (\tttscript{PMS}) & Missing \tttscript{isPackageDeviceAdmin} check, allowing an app to disable an active administrator app.
\tabularnewline
24. \tttscript{setComponentEnabledSetting} (\tttscript{PMS}) & Missing \tttscript{isPackageDeviceAdmin} check, allowing an app to disable components of an active administrator app.
\tabularnewline
25. \tttscript{convertFromTranslucent} (\tttscript{AMS}) & Missing check for \tttscript{enforceNotIsolatedCaller}, allowing a isolated process to affect the transparency of windows.
\tabularnewline
26. \tttscript{notifyLockedProfile} (\tttscript{AMS}) & Missing check for \tttscript{enforceNotIsolatedCaller}, allowing an isolated process to trigger a retrun to the home screen.
\tabularnewline
27. \tttscript{setActiveScorer} (\tttscript{NSS}) & Missing \tttscript{BROADCAST\_NETWORK\_PRIVILEGED} permission check which is always paired with the \tttscript{SCORE\_NETWORKS} permission check.
\tabularnewline
28. \tttscript{getCompleteVoiceMailNumberForSubscriber} (\tttscript{PSIC}) & Incorrect check for permission \tttscript{CALL\_PRIVILEGED} instead of the \tttscript{READ\_PRIVILEGED\_PHONE\_STATE} results in coarse-grained enforcement.\setcounter{rownum}{0}
\tabularnewline \hline
\multicolumn{2}{p{6.5in}}{AMS=ActivityManagerService; 
AOMS=AppOpsManagerService;
CS=ConnectivityService;
DPMS=DevicePolicyManagerService;
LMS=LocationManagerService;
PMS=PackageManagerService;
POMS=PowerManagerService;
UMS=UserManagerService;
WPMS=WallpaperManagerService;
PSIC=PhoneSubInfoController;
ACMS=AccountManagerService;
MLS=MetricsLoggerService;
GHS=GeofenceHardwareService;
NSS=NetworkScoreService}
%
%
%
\end{tabular}
\end{table*}

Table~\ref{tab:vuln} describes the vulnerabilities discovered through our analysis of Android 7.1.1 with \tool. On manually analyzing the inconsistent entry points produced by \tool, we discovered \pTotalRiskImpactEp{} entry points that represent security vulnerabilities. While most of these entry points represent one vulnerability each, two entry points (i.e., \ttt{getLastLocation} and \ttt{setStayOnSetting}, vulnerabilities 15 and 16 in Table~\ref{tab:vuln} respectively) each led us to clusters of multiple identically vulnerable entry points, as described later in this section. For simplicity, we count each cluster as a single vulnerability.

We group the vulnerabilities into the following 3 categories: {\sf (1)} user separation and restrictions, {\sf (2)} App Ops, {\sf (3)} and pre23.  This categorization is based on the subsystems affected by the vulnerabilities (e.g., App Ops), as well as the characteristics they have in common (e.g., pre23). Additionally, some vulnerabilities that are hard to classify have been categorized as {\sf (4)} miscellaneous. 

\myp{VC1: Multi-user Enforcement}
As shown in Table~\ref{tab:vuln}, a majority (i.e., \pUserEps{}) of the vulnerabilities affect Android's separation among users (i.e. user profiles in Android's multi-user enforcement~\cite{multi-user}). These can be further divided into four subcategories based on how they may be exploited: (1) leaking user information across users, (2) operating across users, (3) modifying user settings before a user exists, and (4) allowing users to bypass restrictions.

\emparagraph{1. Leaking Information to Other Users} In 5 entry points (i.e., 1$\rightarrow$5 in Table~\ref{tab:vuln}), the lack of checks leads to potential leaks of security-sensitive information to other users. For instance, using the vulnerable entry point \ttt{getInstalledApplications} in the \ttt{PackageManagerService}, any user can learn of the applications another user has installed, as the entry point does not enforce any checks other than checking if the user being queried exists.
Similarly, the entry point \ttt{hasUserRestriction} in the \ttt{UserManagerService}, previously used as the motivating example, is not protected with the \textit{signature} level permission \ttt{MANAGE\_USERS}, which is present in the similar \ttt{hasBaseUserRestriction} entry point. This omitted authorization check allows a user to know of the restrictions placed on other users, which is security-sensitive information that should not be public. The entry points \ttt{getPackagesHoldingPermissions}, \ttt{checkUriPermission} and \ttt{grantUriPermission} similarly leak sensitive information.

We experimentally confirmed the existence of both the vulnerabilities in \ttt{hasUserRestriction} and \ttt{getInstalledApplications}  in Android 7.1.1 as well as Android 8.1. We have submitted bug reports to Google and received ''moderate'' ASR severity level for both the bugs. Further, we confirmed that the vulnerability in \ttt{getPackagesHoldingPermissions} was fixed in Android 8.1. As a result, we could not submit it to Google's bug program, which only considers bugs affecting the latest version of Android. All remaining vulnerabilities have been reported to Google.

\emparagraph{2. Operating Across Users} Missing authorization checks in 4 entry points (i.e., 6$\rightarrow$9 in Table~\ref{tab:vuln}) allow users to bypass multi-user restrictions and perform sensitive operations on behalf of other users. For example, we discovered that the entry point {\ttt{killPackageDependents}} takes in a \textit{userId} as an argument but does not actually verify whether the calling user is allowed to perform operations on behalf of the supplied \textit{userId}. This allows {\em any user to kill the apps and background processes of any other user}. A similar flaw in entry point \ttt{setUserProvisioningState} enables any user to set the state of any other user profile to states such as "managed", "unmanaged", or "finalized". Such changes may be dangerous. For instance, a user may be able to set their managed enterprise profile to unmanaged, releasing it from the administrator's control.

Fortunately, all four entry points described in this category can only be called from apps installed on the system image (i.e., are protected by authorization checks that ensure this). This indirectly mitigates some damage, by making the vulnerabilities difficult to exploit from a third-party app. However, capability leaks in privileged apps may allow such vulnerabilities to be exploited by third-party apps, as prior work has demonstrated~\cite{gzwj12,wgz+13}. All of these vulnerabilities have been reported to Google.

\emparagraph{3. Modifying User Settings Before A User Exists}  Both the entry points \ttt{installExistingPackageAsUser} and \ttt{setApplicationHiddenSettingAsUser} do not perform the authorization check \ttt{exists}, which verifies if a the \textit{userId} passed into the entry points represents a valid user. Without this check, it is possible for a caller to install an app for a non-existent user or hide an app from a non-existent user. Thus, when the user for whom this change was made is actually created, these settings will already be in place. These entry points are only callable from systems apps; however, system apps may be compromised or may leak capabilities, and the \textit{exists} check needs to be in place to prevent system apps from being tricked into allowing users to install apps in profiles that have yet to be created (e.g., installing apps in a future enterprise profile). We have submitted these vulnerabilities to Google.

\emparagraph{4. Allowing Users to Bypass Restrictions} Vulnerabilities in entry points 12$\rightarrow$14 from Table~\ref{tab:vuln} allow a user to perform operations despite the restrictions placed on the user profile. For instance, the entry point \ttt{setAlwaysOnVpnPackage} does not check for the restriction \ttt{no\_config\_vpn}, allowing a managed user to set the always on VPN for the user profile to another application, effectively switching VPN connections. The entry points \ttt{setWallpaperComponent} and \ttt{startUpdateCredentialsSession} have similar vulnerabilities. All of these vulnerabilities have been reported to Google.

\myp{VC2: App Ops}
\tool identified weaknesses related to App Ops. 
One such vulnerability lies in the \ttt{noteProxyOperation} of the \ttt{AppOpsService}. The entry point makes a note of an application performing some operation on behalf of some other application through IPC. However, unlike other entry points in the \ttt{AppOpsService}, \ttt{noteProxyOperation} is missing the authorization check \ttt{verifyIncomingUid} which includes a check for the \textit{signature} level permission \ttt{UPDATE\_APP\_OPS\_STATS}. Without \ttt{verifyIncomingUid}, it is possible for any non-system app to use \ttt{noteProxyOperation} to query the restrictions a user has placed on other apps, thus retrieving information that should not be available to non-system apps. 

\begin{figure}
\centering
\resizebox{0.9\columnwidth}{!}{
\begin{tikzpicture}

\lstset{
    style=javaStyle,
    emph=[1]{checkOp},
    emphstyle=[1]{\color{blue}\bfseries},
    emph=[2]{String},
    emphstyle=[2]{\color{darkgreen}\bfseries},
    emph=[3]{op,uid,packageName},
    emphstyle=[3]{\color{darkorange}\bfseries},
    emph=[4]{CALL_PRIVILEGED},
    emphstyle=[4]{\color{Brown}\bfseries},
}          

\node {
\begin{lstlisting}  % Start your code-block

/** Do a quick check for whether an application might be 
 * able to perform an operation. This is not a security
 * check; you must use noteOp or startOp for your actual 
 * security checks, which also ensure that the given uid 
 * and package name are consistent. ... */
int checkOp(int op, int uid, String packageName) {...}

\end{lstlisting}
};

\end{tikzpicture}
}
\vspace{-0.5em}
\caption{The comment above \ttt{checkOp} from the class \ttt{AppOpsManager} that states it should not be used as a security check.
}
\label{p1:fig:eval_code_5}
\end{figure}


We discovered a set of identical vulnerabilities in App Ops through our analysis of the \ttt{getLastLocation} entry point in the \ttt{LocationManagerService}, which \tool pointed out as having inconsistent authorization checks. The \ttt{getLastLocation} entry point calls the authorization check \ttt{reportLocationAccessNoThrow} which performs the check \ttt{noteOpNoThrow} from the \ttt{AppOpsManager}, a wrapper for the \ttt{AppOpsService}. \tool correctly identified the use of \ttt{noteOpNoThrow} as an inconsistency since a majority of the entry points (9) in \ttt{LocationManagerService} use the authorization check \ttt{checkLocationAccess} which performs the check \ttt{checkOp} from the \ttt{AppOpsManager}. However, as Figure~\ref{p1:fig:eval_code_5} shows, the comment above the \ttt{checkOp} method clearly states that \ttt{checkOp} should not be used to perform a security check. Instead, one of the various forms of \ttt{noteOp} should be used. This implies that all 9 entry points using the \cq \ttt{checkLocationAccess} suffer from a vulnerability, and that the use of \ttt{reportLocationAccessNoThrow} in \ttt{getLastLocation} is actually appropriate. This instance demonstrates an interesting outcome of the use of consistency analysis in \tool. That is, our use of consistency analysis in \tool is also useful in identifying instances, where the majority of the related entry points are vulnerable. As described previously, for simplicity, we count this cluster of vulnerable entry points as a single vulnerability, which has been submitted to Google.  

\myp{VC3: Pre23}
\begin{figure}
\centering
\resizebox{0.9\columnwidth}{!}{
\begin{tikzpicture}

\lstset{
    showspaces=false,
    showtabs=false,
    breaklines=false,
    showstringspaces=false,
    breakatwhitespace=true,
    tabsize=2,
    captionpos=b,
    commentstyle=\bfseries\color{gray},
    keywordstyle=\bfseries\color{Plum},
    stringstyle=\color{red}\bfseries,
    basicstyle=\ttfamily\footnotesize,
    morekeywords={permission},
    morestring=[b]",
    alsoletter={-,",:},
    comment=[l]{//},
    emph=[1]{"},
    emphstyle=[1]{\color{red}\bfseries},
    emph=[2]{},
    emphstyle=[2]{\color{blue}\bfseries},
    emph=[3]{android:protectionLevel,android:name},
    emphstyle=[3]{\color{darkgreen}\bfseries},
}     

\node {
\begin{lstlisting}  % Start your code-block

<permission android:name="android.permission.WRITE_SETTINGS"
  android:protectionLevel="signature|preinstalled|appop|pre23" />

\end{lstlisting}
};

\end{tikzpicture}
}
\vspace{-0.5em}
\caption{The permission protection levels of \ttt{WRITE\_SETTINGS} in the \ttt{AndroidManifest.xml} file~\cite{pre23}}
\label{p1:fig:eval_code_7}
\end{figure}
%
\tool identified a group of vulnerabilities related to Android's \textit{pre23} permission protection level. 
The entry point \ttt{setStayOnSetting} in the \ttt{PowerManagerService} uses the authorization check \ttt{checkAndNoteWriteSettingsOperation}, which checks if an app has the \textit{signature} level permission \ttt{WRITE\_SETTINGS}. Permissions with the \textit{signature} protection level can only be granted to system apps (i.e., apps that were originally packaged with the system image). However, as shown in Figure~\ref{p1:fig:eval_code_7}, \ttt{WRITE\_SETTINGS} has an additional protection level of \textit{pre23}, which has an interesting effect on Android versions 6.0 or above (i.e., API 23 or above). It allows permissions marked as \textit{pre23} to be granted to non-system apps that target API 22 or below~\cite{protection_level}. Thus, as a result of the improperly defined permission protection levels for \ttt{WRITE\_SETTINGS}, the \ttt{pre23} grants non-system apps access to a \ttt{signature} level permission.

The damage resulting from the \textit{pre23} vulnerability is not restricted to the entry point \ttt{setStayOnSetting}. A simple search for the use of the permission \ttt{WRITE\_SETTINGS} in the authorization checks \tool mined for all entry points in the system revealed 13 additional entry points checking for the permission \ttt{WRITE\_SETTINGS}, 6 of which can be called from a non-system app using the \textit{pre23} vulnerability (i.e., these 6 entry points are not protected with any other \textit{signature} permission). Of the 6, the following 5 entry points deal with tethering and are located in the \ttt{ConnectivityService}: \ttt{setUsbTethering}, \ttt{stopTethering}, \ttt{startTethering}, \ttt{tether}, and \ttt{untether}. The last \ttt{setWifiApEnabled} was located in the \ttt{WifiServiceImpl} and allows a caller to set some WIFI access point configuration, causing the device to connect or disconnect from any WIFI access point the caller provides. These entry points are clearly more important to protect than \ttt{setStayOnSetting}, and an adversary may be able to do considerable damage by exploiting them. We do not count these entry points in our initial list of \pTotalRiskImpactEp{} vulnerabilities. All entry points affected by the \textit{pre23} vulnerability have been submitted to Google.

\myp{VC4: Miscellaneous Vulnerabilities}
\tool also identified \pOtherEps{} vulnerabilities related to information leaks, denial of service, disabling of administrator apps, and a mixture of other minor vulnerabilities. All of these vulnerabilities have been reported to Google.

\section{Non-security Inconsistencies}
\label{p1:sec:eval_otherInconsistencies}

\tool identified \pTotalOtherInconsistencies{} inconsistencies (i.e., rules) that did not represent vulnerabilities. Aside from the 20 rules that were caused by easily fixed bugs in \tool, we resolve these non-security inconsistencies to their likely causes, and classify them into 9 categories, shown in   
Table~\ref{p1:table:other_inconsistencies} (\ref{rq:causes}). The first three categories point to irregular coding practices, i.e., (1) inconsistent application of short-cuts to speed up access, (2) access bugs or discrepancies in how the permission should be used as per the documentation, or (3) inconsistent application of hard-coded checks that would potentially lead to vulnerabilities on future updates. The remaining 6 categories point to issues that could be corrected by engineering improvements to \tool, such as considering semantic equivalence between authorization checks, or the integration of call graph comparison and method-name comparison to mitigate the analysis of functionally different entry points. We provide additional details on all of the 9 categories in 
\begin{conference}
the extended version~\cite{acminer-extended}.
\end{conference}
\begin{extended}
Appendix~\ref{p1:app:appendix_eval_otherInconsistencies}.
\end{extended}

\begin{table}[t!]
\caption{\label{p1:table:other_inconsistencies}Non-security Inconsistencies}
\scriptsize
\scalebox{1}{
\rowcolors{3}{black!10}{}
\begin{tabular}{|l|c|} 
\hline
\multicolumn{1}{|>{{\centering\arraybackslash}}c|}{\textbf{Type of Inconsistency}} & \multicolumn{1}{>{{\centering\arraybackslash}}c|}{\textbf{Number of Rules}} \tabularnewline \hline
1. Shortcuts to Speed-Up Access & \pRulesSpeedUp{} \tabularnewline
2. Fixing Access Bugs & \pRulesIncreasedAccess{} \tabularnewline
3. Potential Vulnerabilities & \pRulesUnusedContextQueries{} \tabularnewline
4. Difference in Functionality & \pRulesDiffInFunctionality{} \tabularnewline
5. Checks With Different Arguments & \pRulesDiffArgs{} \tabularnewline
6. Noise in Captured Checks & \pRulesNoise{} \tabularnewline
7. Restricted to Special Callers & \pRulesSpecialCaller{} \tabularnewline
8. Semantic Groups of Checks & \pRulesPermissionCheck{} \tabularnewline
9. Equivalent Checks & \pRulesEquivalentChecks{} \tabularnewline
\hline
\end{tabular}
}
\end{table}

\section{Limitations}
\label{p1:sec:limitations}

While \tool is effective at discovering inconsistencies that lead to vulnerabilities, consistency analysis has a general limitation, i.e., it may not discover vulnerabilities that are consistent throughout code. Further, for precision, \tool does not handle the invocation of secondary entry points, i.e., calls to entry points from within other entry points. \tool omits the {\em ordering} of the authorization checks and hence does not identify improper operator use in \textit{\cps}, which we plan to explore in the future. Moreover, \tool's semi-automated analysis requires the participation of domain experts. However, as Section~\ref{sec:eval} demonstrates, our design significantly reduces manual effort in contrast with the manual validation of system services. As we have already analyzed AOSP version 7.1.1, only minor input is needed to analyze newer versions or vendor modifications.
Finally, \tool shares the general choices made by Android static analysis techniques for precision, i.e., it does not consider native code, or runtime modifications (e.g., reflection, dynamic code loading, Message Handlers).

\section{Related Work}
\label{sec:relwork}

\tool addresses a problem that has conceptual origins in prior work on
authorization hook validation for traditional systems.
Early investigations targeted the 
Linux Security Modules (LSM) hook placement in the Linux kernel, using techniques such as type analysis using CQUAL~\cite{zej02}, program
dominance~\cite{zjk04}, and dynamic analysis to create authorization graphs
from control flow traces~\cite{jez04,ejz02}.
As the lack of ground truth is a general
challenge for hook validation, prior work
commonly uses consistency analysis~\cite{jez04, ejz02,
tzm+08}. Closest to our work is AutoISES~\cite{tzm+08}, which infers
security specifications from code bases such as the Linux kernel and
Xen.  However, AutoISES assumes a known set of security functions or security-specific data structures, whereas \tool identifies a diverse set of authorization checks in the Android middleware. 


%

The closest to our approach is prior work on authorization hook validation in the Android platform, i.e., Kratos~\cite{socq+16} and AceDroid~\cite{ahs+18}. \tool distinguishes itself from Kratos and AceDroid through its deep analysis of Android's system services, and its significantly improved coverage of Android's authorization checks. 

Kratos~\cite{socq+16} compares a small subset of  Android's authorization checks across entry points of the same system image to look for inconsistent checks between different system services. \tool does not analyze for consistency across services. Instead, we hypothesize that entry points within a single service are similar in purpose, and hence, analyze the consistency of the authorization checks by comparing the entry points of every system service with other entry points in the same service. Further, \tool's semi-automated approach for identifying authorization checks results in a \pPrecentIncreaseContextQueries{}\% improvement over Kratos' manually-curated list (Section~\ref{sec:eval}).

Similarly, AceDroid~\cite{ahs+18} evaluates the consistency of the authorization checks among different vendor-modified Android images, and hence differs from \tool in terms of its objective. AceDroid makes key improvements over Kratos, as it considers various non-standard \cqs not considered by Kratos. However, AceDroid also relies on a manually-defined list of \cqs, which 
may produce only approximately \pPrecentIdContextQueries\% of the \cqs that \tool is able to find (Section~\ref{sec:eval}). 
\begin{extended}
While a quantitative comparison with Kratos and AceDroid is difficult due to the unavailability of code/data, this qualitative comparison demonstrates the remarkable improvements made by \tool's novel techniques.
\end{extended}
%

Finally, recent literature is rich with static and dynamic program analysis of
third-party Android apps targeted at privacy
infringement~\cite{egc+10, hhj+11, gcec12}, malware~\cite{zdyz14,
hzt+14, ejm+14}, as well as vulnerabilities~\cite{eomc11,
cfgw11}.  
As the Android platform and apps use similar programming
abstractions, researchers have applied these
tools and techniques to the platform code, e.g., for providing a mapping between APIs and corresponding permissions~\cite{fch+11, azhl12, bkml14} or automatically identifies privacy-sensitive sources and sinks~\cite{arb14}. Moreover, prior work has also studied the platform
code, to analyze OEM apps for capability leaks (e.g., Woodpecker~\cite{gzwj12} and
SEFA~\cite{wgz+13}), discover privilege escalation on update vulnerabilities (e.g., Xing et al.~\cite{xpw+14}), or uncover gaps in the file access control policies in OEM firmware images (e.g., Zhou et al.~\cite{zlz+14}). While \tool shares a similar objective, unlike prior work, \tool provides an
automated and systematic investigation of core platform services.




\section{Conclusion}
\label{sec:conc}
This paper provides an approach for the systematic and in-depth analysis of a crucial portion of Android's reference monitor, i.e., its system services. We design \tool, a static analysis framework that comprehensively identifies a diverse  array of authorization checks used in Android's system services, and then adapts the well-founded technique of association-rule mining to detect inconsistent access control among service entry points. We discover 28 security vulnerabilities by analyzing AOSP version 7.1 using \tool, and demonstrate significantly higher coverage of checks than the state of the art. Our work demonstrates the feasibility of in-depth analysis of Android's system services with \tool, as it significantly reduces the number of entry points that must be analyzed, from over 4000 with millions of lines of code to a mere \pEpWithAtLeastOneRule{}.

\myp{Acknowledgements} This work was supported by the Army Research Office (ARO) grant W911NF-16-1-0299 and the National Science Foundation (NSF) grants CNS-1253346 and CNS-1513690. Opinions, findings, conclusions, or recommendations in this work are those of the authors and do not reflect the views of the funders.

{\small
\bibliographystyle{ACM-Reference-Format}
\setcitestyle{numbers,sort&compress}
\citestyle{acmnumeric}
\bibliography{bibs/phone,bibs/ccs17,bibs/os,bibs/misc_links,bibs/add_to_os,bibs/analysis}
}

\begin{extended}
\appendix
\section{Extracting Java Class Files}
\label{p1:app:extracting_java_class}
\tool first obtains the Java class files for the desired version of Android.
Soot analyzes Java programs by translating their code into its own Java code representation (i.e., Jimple).
It parses \textit{.class}, \textit{.java}, \textit{.jimple}, and \textit{.jar} files. 
The recent incorporation of Dexpler~\cite{bklm12} into Soot adds support for \textit{.dex} files and \textit{.jar} files containing \textit{.dex} files (i.e., \textit{.apk} files).

\tool is designed to analyze both AOSP and OEM builds of the Android middleware.
As such, \tool must process Android's \textit{system.img} and extract the class files.
If the \textit{system.img} contains \textit{.odex} or \textit{.oat} files, \tool decompiles and recompiles them into \textit{.dex} using baksmali/smali~\cite{smali}.
As shown in Figure~\ref{fig:design_diagram}, \tool automates this process and outputs a \textit{.jar} file containing all the class files of the Android middleware in Soot's Jimple format.
This \textit{.jar} containing Jimple files is then then used on all subsequent runs of \tool.
This pre-processing can take upwards of 20 minutes and saves unnecessary re-computation.

\section{Extracting System Services and Entry Points}
\label{p1:app:extracting_java_class_2}

Android's middleware is largely composed of isolated services that only communicate through predefined \ttt{Binder} boundaries.
This division allows each service to be analyzed separately, and to define each service by the code reachable through its \ttt{Binder} entry points. \tool extracts system services and their entry points similar to PScout~\cite{azhl12}, Kratos~\cite{socq+16}, and AceDroid~\cite{ahs+18}.

Specifically, \tool uses Soot's class hierarchy information to find all classes that: (1) are subclasses of \ttt{android.os.Binder} and (2) implement the \ttt{onTransact} method.
Next, \tool identifies the signatures of the handler methods that \ttt{onTransact} methods call.
Our insight is that all such methods are invoked on the same receiver object (i.e., \textit{this}).
Note that handler methods need not be implemented in the \textit{stub} class itself: it can also be implemented in subclasses.
\tool therefore uses Soot's class hierarchy information to identify all of the \textit{stub's} subclasses.
Finally, the class and method information is used to identify all concrete methods in these \textit{stub} and subclasses that match the handler method signatures. These concrete methods become the entry points for analysis.

\section{The Exclude List}
\label{p1:app:appendix_excludeList}

\begin{table}[t!]
\caption{\label{p1:table:excluded_elements}Subset of the Excluded List for the Call Graph}
\scalebox{.95}{
\small
\rowcolors{3}{black!10}{}
\begin{tabular}{|p{2cm}|p{6.4cm}|}
\hline
\multicolumn{0}{|c|}{\textbf{Procedure}} & \multicolumn{1}{c|}{\textbf{Excluded Elements}} \tabularnewline \hline
Class Path & 
    \lword{java.*}, 
    \lword{javax.*}, 
    \lword{gov.nist.javax.*}, 
    \lword{org.*}, 
    \lword{sun.*}, 
    \lword{com.sun.*}, 
    \lword{com.ibm.*}, 
    \lword{com.google.common.*}, 
    \lword{soot.*}, 
    \lword{junit.*},
    \lword{com.android.dex.*}, 
    \lword{dalvik.*}, 
    \lword{android.test.*},
    \lword{android.text.*},
    \lword{android.util.*},
    \lword{android.animation.*},
    \lword{android.view.animation.*},
    \lword{android.icu.*},
    \lword{android.database.sqlite.*},
    \lword{android.content.res.*},
    \lword{com.android.org.bouncycastle.*},
    \lword{android.graphics.*},
    \lword{android.preference.*},
    \lword{android.os.UserHandle},
    \lword{android.os.Process},
    \lword{android.os.Binder},
    \lword{android.os.Debug},
    \lword{android.net.Uri},
    \lword{android.net.Uri\$*},
    \lword{libcore.util.Objects},
    \lword{android.os.Bundle},
    \lword{android.os.Parcel},
    \lword{android.view.View},
    \lword{libcore.io.IoUtils}
\tabularnewline 
Interface & 
    \lword{java.lang.Iterable},
    \lword{java.util.Iterator},
    \lword{java.util.ListIterator},
    \lword{java.lang.Comparable},
    \lword{java.util.Comparator},
    \lword{java.util.Collection},
    \lword{java.util.Deque},
    \lword{java.util.Enumeration},
    \lword{java.util.List},
    \lword{java.util.Map},
    \lword{java.util.Map\$Entry},
    \lword{java.util.NavigableMap},
    \lword{java.util.NavigableSet},
    \lword{java.util.Queue},
    \lword{java.util.Set},
    \lword{java.util.SortedMap},
    \lword{java.util.SortedSet},
    \lword{java.lang.Runnable},
    \lword{android.os.Parcelable},
    \lword{android.os.IInterface},
\tabularnewline
Interface All & 
    \lword{java.lang.AutoCloseable},
    \lword{libcore.io.Os},
    \lword{android.database.Cursor}
\tabularnewline
Superclass & 
    \lword{java.lang.Object},
    \lword{android.graphics.drawable.Drawable},
    \lword{android.content.Context}
\tabularnewline
Superclass All &
    \lword{com.android.internal.telephony.SmsMessageBase},
    \lword{java.lang.Throwable},
    \lword{android.app.Dialog},
    \lword{android.view.View}
\tabularnewline
Method Signature & 
    DevicePolicyManagerService: void writeToXml(XmlSerializer)\newline
    UserManagerService: void writeUserLP(UserData)\newline
    UserManagerService: void writeUserListLP()
\tabularnewline
\hline
\end{tabular}
}
\end{table}

A subset of the exclude list is displayed in Table~\ref{p1:table:excluded_elements} which groups the excluded elements by the five different procedures we use for excluding elements of the call graph (see our website~\cite{acminer-website} for the complete exclude list).
The first procedure \textit{Class Path} will exclude any method of a class whose full class name either exactly equals the listed name or starts with the listed name when the listed name ends with \ttt{.*} or \ttt{\$*}. The second \textit{Interface} excludes any method that implements a method of an interface that exactly equals the listed name. The third \textit{Interface All} excludes all methods of all classes that implement directly or indirectly an interface that exactly equals the listed name. The fourth \textit{Superclass} excludes any method that implements a method of a class that exactly equals the listed name. The fifth \textit{Superclass All} excludes all methods of all classes that extend directly or indirectly a class that exactly equals the listed name. Finally, \textit{Method Signature} excludes a single method that matches the supplied method signature.

To define the exclude list, we explored the call graph, looking for classes and methods that form roots of call graph areas not useful to our analysis. We started by broadly excluding, using the Class Path procedure, any packages that are not Android related as these are generic Java libraries and will not contain Android specific authorization checks. Since these generic Java libraries often contain interfaces and classes implemented and extended in code outside the Java libraries, we also found it necessary to use the Interface and Superclass procedures to cover subclasses of these in Android specific code. Next, as we still observed significant bloat, we broadened our search to include Android specific code. We started excluding packages such as \ttt{android.icu.*} and \ttt{com.android.org.bouncycastle.*} as these are common non-Android libraries that were integrated into the Android specific code but hold no Android-specific authorization checks. Lastly, we looked for methods in the call graph with a large number of outgoing edges and/or a large number of outgoing edges for the same method invoke statement, adding the methods and classes to the exclude list as needed. 

In addition to those described above, the exclude list has a few more additions and omissions. First, while \ttt{com.android.Context} is excluded using the Superclass procedure, all permission check methods in the class are omitted from this exclusion. This enables \tool to capture the authorization checks inside these various important methods. Second, classes such as \ttt{android.os.UserHandle} and \ttt{android.os.Process} have many  authorization check methods, yet are elements of the exclude list. Adding such classes to the exclude list gives context to the operations these authorization-checks perform. Such context would be lost otherwise since the actual checks involve manipulating integers values.

\section{Simplification of Authorization Checks}
\label{p1:app:appendix_authCheckSimplify}

We apply the following 10 simplification rules when generating authorization checks for a entry point and use the value \textit{ALL} in our output to indicate that a variable can be assigned any value. (1) Any variable that does not represent a variable of a primitive type or a string type is assigned the value ALL. We found such variables were safely ignored as they did not add any additional context to the authorization checks being output and instead generated a large amount of noise when included. (2) To handle loops and recursion, a value of ALL is assigned if a cycle is detected when resolving the possible values for a variable. (3) When a variable represents a parameter of the current entry point method, a value of ALL is assigned as the parameter's value could be anything. (4) A value of ALL is assigned to all variables obtaining there value from \textit{lengthof}, \textit{instanceof}, \textit{new}, and \textit{cast} expressions as no new context is gained about the authorization checks by including these expressions. (5) If an expression retrieving a value from an array has the value ALL for either its index or its array reference, then a value of ALL is assigned to the value retrieved from the array as no new context will be gained from such expressions. (6) Variables assigned from the return of a method in the class \ttt{Bundle} always get assigned the value ALL since \ttt{Bundle} is used to pass data through \ttt{Binder} and can therefore be anything. (7) For a \cps or \cqs when computing the product of all the possible variables and the values for each variable, if the possible set of values for a variable contains ALL then the set is transformed into a singleton set containing only the value ALL. (8) If a variable is assigned a value of \textit{NULL} and has any other value assigned to it, the NULL value is ignored so as to remove NULL checks. (9) When dealing with the final set of values generated from a \cp (i.e., a set of pairs), we remove a pair from this set if: the pair has NULL or ALL for either of its values, the pair has a constant value for both of its values, or both of the values in the pair are equivalent. (10) As Java object \ttt{equals} methods when used in a conditional statements take the form \ttt{if(o1.equals(o2)} \ttt{==} \ttt{0)}, the pair generated by \tool representing the authorization check would normally contain one entry for the value of the equals method and one entry for the constant value being compared. However, what is really desired is the values of the objects being compared by the equals (e.g., in the case of string comparisons). As such, we simplify the representation of such authorization checks by reconstructing the pairs so that they contain values for the calling object and the argument object (i.e., \ttt{o1} and \ttt{o2}) of the \ttt{equals} method.

\section{Representing \CQS}
\label{p1:app:appendix_repCQS}

\begin{table}[t!]
\caption{\label{p1:table:match-expr} Method and Field Matching Expressions}
\scalebox{.95}{
\small
\rowcolors{3}{black!10}{}
\begin{tabular}{|p{3.3cm}|p{5.1cm}|}
\hline
\multicolumn{0}{|c|}{\textbf{Operation}} & \multicolumn{1}{c|}{\textbf{Description}} \tabularnewline \hline
starts-with-(package\textbar class\textbar name) & Resolves to true if the package, class, or name starts with the given string for some method or field.\tabularnewline
ends-with-(package\textbar class\textbar name) & Resolves to true if the package, class, or name ends with the given string for some method or field.\tabularnewline
contains-(package\textbar class\textbar name) & Resolves to true if the package, class, or name contains the given string for some method or field.\tabularnewline
equals-(package\textbar class\textbar name) & Resolves to true if the package, class, or name equals the given string for some method or field.\tabularnewline
regex-(package\textbar class\textbar name) & Resolves to true if any part of the package, class, or name matches the given regular expression for some method or field.\tabularnewline
regex-name-words & Resolves to true if the method or field name matches any part of the given regular expression when the name is split at word boundary indicators.\tabularnewline
regex-class-words & Resolves to true if the method or field class matches any part of the given regular expression when the class is split at word boundary indicators. Note this operation takes in a second argument which specifies which class name to match in the case of an inner class. The indexing starts at 0 for the inner most class name and increases by 1 for each outer class name. An index of -1 means the regex can match any of the class names from inner most to outer most.\tabularnewline
\hline
\end{tabular}
}
\end{table}

\begin{figure}
\centering
\resizebox{0.95\columnwidth}{!}{
\begin{tikzpicture}

\lstset{
    showspaces=false,
    showtabs=false,
    breaklines=false,
    showstringspaces=false,
    breakatwhitespace=true,
    tabsize=2,
    captionpos=b,
    commentstyle=\color{green},
    keywordstyle=\bfseries\color{blue},
    stringstyle=\color{red}\bfseries,
    basicstyle=\ttfamily\footnotesize,
    morekeywords={and,or,not,starts-with-package,regex-name-words},
    morestring=[b]`,
    alsoletter={-,.},
    emph=[1]{android.,com.android.},
    emphstyle=[1]{\color{red}\bfseries},
}          

\node {
\begin{lstlisting}  % Start your code-block

(and 
  (or 
    (starts-with-package android.) 
    (starts-with-package com.android.)
  )
  (regex-name-words `^can\s(clear\sidentity|draw\soverlays
    |run\shere|user\smodify\saccounts|access\sapp\swidget
    |read\sphone\s(state|number)|caller\saccess\smock)\b`
  )
)

\end{lstlisting}
};

\end{tikzpicture}
}
\caption{An example of the expressions used to describe \cqs.}
\label{p1:fig:appendix_cqExprExample}
\end{figure}

\begin{figure}
\centering
\resizebox{0.95\columnwidth}{!}{
\begin{tikzpicture}

\lstset{
    style=javaStyle,
    emph=[1]{getProviderMimeType,enforceNotIsolatedCaller,canClearIdentity,checkComponentPermission,getCallingPid,getCallingUid,clearCallingIdentity,getUserId,isIsolated},
    emphstyle=[1]{\color{blue}\bfseries},
    emph=[2]{String,Binder,Uri,UserHandle,SecurityException},
    emphstyle=[2]{\color{darkgreen}\bfseries},
    emph=[3]{uri,userId,pid,uid,callingUid,callingPid,clearedIdentity,ident,caller},
    emphstyle=[3]{\color{darkorange}\bfseries},
    emph=[4]{INTERACT_ACROSS_USERS,INTERACT_ACROSS_USERS_FULL},
    emphstyle=[4]{\color{Brown}\bfseries},
}          

\node {
\begin{lstlisting}  % Start your code-block

public String getProviderMimeType(Uri uri, int userId) {
  enforceNotIsolatedCaller();
  int callingUid = Binder.getCallingUid();
  int callingPid = Binder.getCallingPid();
  boolean clearedIdentity = false;
  long ident = 0;
  if (canClearIdentity(callingPid, callingUid, userId)) {
    clearedIdentity = true;
    ident = Binder.clearCallingIdentity();
  }
  ...
}

void enforceNotIsolatedCaller(String caller) {
  if (UserHandle.isIsolated(Binder.getCallingUid()))
    throw new SecurityException();
}

boolean canClearIdentity(int pid, int uid, int userId) {
  if (UserHandle.getUserId(uid) == userId)
    return true;
  if (checkComponentPermission(INTERACT_ACROSS_USERS, pid, 
        uid, -1, true) == 0 
      || checkComponentPermission(INTERACT_ACROSS_USERS_FULL, pid,
        uid, -1, true) == 0)
    return true;
  return false;
}

\end{lstlisting}
};

\end{tikzpicture}
}
\caption{Pseudo-code from the \texttt{\protect\seqsplit{ActivityManagerService}} class.\protect\footnotemark}
\label{p1:fig:appendix_exampleCode1}
\end{figure}

\footnotetext{\label{ams2}\scriptsize\url{http://androidxref.com/7.1.1_r6/xref/frameworks/base/services/core/java/com/android/server/am/ActivityManagerService.java\#11401}}

To generalize our definition of \cqs, we developed a representation that allows us to express the \cqs as a combination of regular expressions and several string matching procedures used to describe the package, class, and/or method name, combined with conditional logic (i.e., \textit{and}, \textit{or}, and \textit{not}). Table~\ref{p1:table:match-expr} outlines a complete list of the base expressions used to define the regular expressions and string matching procedures. Figure~\ref{p1:fig:appendix_cqExprExample} shows an example expression from the set of expressions generated for Android AOSP 7.1.1 that describes, along with several others, the \cq \ttt{canClearIdentity} shown in line 19 of Figure~\ref{p1:fig:appendix_exampleCode1}.
In Figure~\ref{p1:fig:appendix_cqExprExample} we see the expression is actually made up of 4 different operations. As one would imagine, \textit{and} and \textit{or} operations behave just like their counterparts in normal conditional logic. Furthermore, \textit{starts-with-package} behaves as expected in that it checks to see if the package name of the class in which a method is defined starts with a given string, in this case either \ttt{android.} or \ttt{com.android.}. Finally, \textit{regex-name-words} is a bit more complicated in that checks to see if any part of a method name matches the given regular expression when the method name is split by word boundary indicators (i.e., splitting the name at capital letters in the case of camel-case or at an underscore character). That is, for the method named \ttt{canClearIdentity}, the method name is transformed into the string \ttt{can} \ttt{clear} \ttt{identity} before the regular expression matching is performed. By splitting the method name at word boundary indicators, this allows the regular expressions to be defined in terms of key words without having to worry about situations where one word might contain others in its spelling (e.g., \textit{is} and \textit{list}).

\section{Defining \CPS Filter}
\label{p1:app:appendix_definingCPSFilter}

The first step in defining the \cp filer was to have a domain expert explore the unique fields, strings, and methods (i.e., elements) used in every marked conditional statement to decide which of these elements are used in an authorization context and which are unimportant. In general, the exploration of these lists proceeds as follows: (1) Start by looking at the \cps contained within \cqs and the name of the \cqs themselves. As any element used within \cps of \cqs are likely important, we can already include these elements in our list of elements that indicate \cps. However, by studying the \cps, we can also gain further insight into what \cps might look like elsewhere in Android's code. Moreover, we can learn key words that might help with identifying additional elements with authorization context from our lists. (2) Perform a search on our lists of elements using the key words learned from the previous step as these can indicate common functionality. (3) Verify if the elements resulting from the previous search are actually used in \cps by looking at both how all conditional statements use an element and the overall flow of the methods containing the conditional statements using the element. Add any that are determined to be used in \cps to our list of elements that indicate \cps. Note, when recording elements that indicate \cps it is also important to indicate exactly how the elements are being used in \cps as certain elements may only be authorization related in specific context. For example, as shown in Figure~\ref{p1:fig:appendix_exampleCode2}, some \cps involve the use of methods that only when combined together in a specific way construct an authorization check (e.g., the string \ttt{equals} method and the \ttt{get} method of \ttt{SystemProperties} when \ttt{get} is provided the string \ttt{SYSTEM\_DEBUGGABLE}). (4) On the remaining unchecked elements, go through each element as in step 3. If new key words are added to our key word list as a result of finding a new indicator element, perform steps 2-3 again. Keep performing step 4 until all elements in our lists have been processed. 

In defining the \cp filter, it is important to understand that the use of the fields, methods, and strings identified do not always indicate \cps. Instead, they indicate \cps when used in a specific context. As such, any filter we design will have to allow us to specify such a context along with the fields, methods, and strings. Therefore, while we cannot rely solely on the expressions outlined in Table~\ref{p1:table:match-expr} we can reuse them. We use a customized XML document that \tool takes in as input as shown in Figure~\ref{fig:design_diagram} as our filter specification. The filter specification is constructed from a group of rules that are conditionally joined together using the \textit{and}, \textit{or}, and \textit{not} operators into one large conditional expression that specifies if a given conditional statement should be included in our final list of \cps.

\begin{figure}
\centering
\resizebox{0.95\columnwidth}{!}{
\begin{tikzpicture}

\lstset{
    showspaces=false,
    showtabs=false,
    breaklines=false,
    showstringspaces=false,
    breakatwhitespace=true,
    tabsize=2,
    captionpos=b,
    commentstyle=\color{green},
    keywordstyle=\bfseries\color{Plum},
    stringstyle=\color{red}\bfseries,
    basicstyle=\ttfamily\footnotesize,
    morekeywords={KeepFieldValueUse,Restrictions,IsInArithmeticChain},
    morestring=[b]`,
    alsoletter={-,",0},
    emph=[1]{0,","false"},
    emphstyle=[1]{\color{red}\bfseries},
    emph=[2]{and,or,not,starts-with-package,regex-name-words,regex-class-words},
    emphstyle=[2]{\color{blue}\bfseries},
    emph=[3]{UseUnion,HandleConstants,Value},
    emphstyle=[3]{\color{darkgreen}\bfseries},
}          

\node {
\begin{lstlisting}  % Start your code-block

<KeepFieldValueUse Value="(and 
    (regex-name-words `\b(flag(s)?)\b`) 
    (regex-class-words `\b((uri\spermission)
      |((package|application)\smanager\sservice)
      |permission\s(state|data)|package\ssetting
      |layout\sparams|display|(activity|application
      |provider|user|service|display|device)\sinfo)\b` 0)
    )">
  <Restrictions UseUnion="false">
    <IsInArithmeticChain HandleConstants="false"/>
  </Restrictions>
</KeepFieldValueUse>

\end{lstlisting}
};

\end{tikzpicture}
}
\caption{An example of the an entry in the \cp filter. This entry matches a number of flag field \cps, such as  Figure~\ref{p1:fig:appendix_exampleCode2} line 6, by their name, class, and use that have authorization context.}
\label{p1:fig:appendix_filterExample1}
\end{figure}

Figure~\ref{p1:fig:appendix_filterExample1} provides an example rule from our filter specification. It details the rule \textit{KeepFieldValueUse} used to match the \cp at line 6 of Figure~\ref{p1:fig:appendix_exampleCode2}. As shown, the rule uses the expressions outlined in Table~\ref{p1:table:match-expr} to first identify the use of a flag field in a conditional statement that may potentially indicate a \cp. The rule then further restricts what is considered a \cp by applying the restriction \textit{IsInArithmaticChain}. This restriction only allows conditional statements to be considered \cps if the field or method return value is used in a chain of arithmetic expressions whose resolution is then used in the conditional statement. The arithmetic operations are all standard Java binary and unary operators (e.g., +, -, ==, !=, and !). Such a restriction allows us to exclude situations like \verb| if(0 == method(flag))| where the flag field is being used in a conditional statement as an argument to a method while including conditional statements who use field and method return values as in Figure~\ref{p1:fig:appendix_exampleCode2} line 6.

\begin{figure}
\centering
\resizebox{0.95\columnwidth}{!}{
\begin{tikzpicture}

\lstset{
    showspaces=false,
    showtabs=false,
    breaklines=false,
    showstringspaces=false,
    breakatwhitespace=true,
    tabsize=2,
    captionpos=b,
    commentstyle=\bfseries\color{gray},
    keywordstyle=\bfseries\color{Plum},
    stringstyle=\color{red}\bfseries,
    basicstyle=\ttfamily\footnotesize,
    morekeywords={KeepFieldValueUse,Restrictions,IsInArithmeticChain,KeepMethodReturnValueUse,IsValueUsedInMethodCall,Matcher},
    morestring=[b]`,
    alsoletter={-,",0},
    comment=[l]{//},
    emph=[1]{0,","false","true",equals,"0","-1","StringMatcher","MethodMatcher"},
    emphstyle=[1]{\color{red}\bfseries},
    emph=[2]{and,or,not,starts-with-package,regex-name-words,regex-class-words,equal-name,regex},
    emphstyle=[2]{\color{blue}\bfseries},
    emph=[3]{UseUnion,HandleConstants,Value,class,Position},
    emphstyle=[3]{\color{darkgreen}\bfseries},
}          

\node {
\begin{lstlisting}  % Start your code-block

<KeepMethodReturnValueUse Value="(equal-name equals)">
  <Restrictions UseUnion="false">
    <IsInArithmeticChain HandleConstants="false"/>
    <Restrictions UseUnion="true">
      <IsValueUsedInMethodCall Position="-1">
        <Matcher class="MethodMatcher" Value="(and 
          (regex-class-words `\bsystem\sproperties\b` 0) 
          (regex-name-words `\bget\b`)
        )"/>
        <Restrictions UseUnion="false">
          <IsValueUsedInMethodCall Position="0">
            <Matcher class="StringMatcher" Value="(
              regex `ro\.(factorytest|test_harness
              |debuggable|secure)`
            )"/>
          </IsValueUsedInMethodCall>
        </Restrictions>
      </IsValueUsedInMethodCall>
      <IsValueUsedInMethodCall Position="0">
        ...
        // Same data as 
        // <IsValueUsedInMethodCall Position="-1">
        // except flipped
      </IsValueUsedInMethodCall>
    </Restrictions>
  </Restrictions>
</KeepMethodReturnValueUse>

\end{lstlisting}
};

\end{tikzpicture}

}
\caption{An example of the an entry in the \cp filter. This entry specifically matches situations such as the \cp shown in Figure~\ref{p1:fig:appendix_exampleCode2} line 5.}
\label{p1:fig:appendix_filterExample2}
\end{figure}

To further illustrate how rules are expressed in our filter specification, we take a look at the more complex example of Figure~\ref{p1:fig:appendix_filterExample2} which is a rule to match \cps like line 5 of Figure~\ref{p1:fig:appendix_exampleCode2}. The rule \textit{KeepMethodReturnValueUse} specification is similar to that of KeepFieldValueUse from our previous example except it has an additional set of restrictions. These restrictions are all forms of the \textit{IsValueUsedInMethodCall} which enables us to specify the possible arguments for a method as well as the possible calling object of the method. In this case, we have two sets of nested IsValueUsedInMethodCall restrictions the results of which are ored together when evaluated. The outer most IsValueUsedInMethodCall restriction in either set specifies that only the methods whose return value is used as part of the \ttt{equals} call, whose name contains \ttt{get}, and who is a member of the class \ttt{SystemProperties} be considered a \cp. The \textit{Position} attribute of IsValueUsedInMethodCall specifies where this restriction is to check for a value matching this description (i.e., -1 for the calling object and 0 for the first argument of the method). The inner IsValueUsedInMethodCall restriction of each set then further specifies that any method matching the outer restrictions description must also take as an argument at position 0 a string that matches the given regex. Combining all these restrictions together, we get a rule that says to treat any conditional statements as \textit{cps} if they are checking if a value \textit{A} equals some string where \textit{A} is the return value of a \ttt{get} method in \ttt{SystemProperties} retrieving the associated system value for some given key.

\begin{figure}
\centering
\resizebox{0.95\columnwidth}{!}{
\begin{tikzpicture}

\lstset{
    style=javaStyle,
    emph=[1]{dumpHeap,findProcess,checkCallingPermission,checkComponentPermission,getCallingPid,getCallingUid,getAppId,isIsolated,checkUidPermission,handleIncomingUser,isSameApp,equals,get},
    emphstyle=[1]{\color{blue}\bfseries},
    emph=[2]{String,SecurityException,ProcessRecord,SystemProperties,ApplicationInfo,Binder,UserHandle},
    emphstyle=[2]{\color{darkgreen}\bfseries},
    emph=[3]{userId,pid,uid,process,permission,exported,owningUid,proc,appId},
    emphstyle=[3]{\color{darkorange}\bfseries},
    emph=[4]{SET_ACTIVITY_WATCHER,SYSTEM_DEBUGGABLE,FLAG_DEBUGGABLE,ALLOW_FULL_ONLY,MY_PID},
    emphstyle=[4]{\color{Brown}\bfseries},
}          

\node {
\begin{lstlisting}  % Start your code-block

public boolean dumpHeap(String process, int userId, ...) {
  if(checkCallingPermission(SET_ACTIVITY_WATCHER) != 0)
    throw new SecurityException();
  ProcessRecord proc = findProcess(process, userId);
  if (!("1".equals(SystemProperties.get(SYSTEM_DEBUGGABLE, "0")))
      && 0 == (proc.info.flags & ApplicationInfo.FLAG_DEBUGGABLE))
    throw new SecurityException();
  ...
  return true;
}

ProcessRecord findProcess(String process, int userId) {
  int pid = Binder.getCallingPid();
  int uid = Binder.getCallingUid();
  userId = handleIncomingUser(pid, uid, userId, true, 
    ALLOW_FULL_ONLY, null);
  ...
  return proc;
}

int checkCallingPermission(String permission) {
  int pid = Binder.getCallingPid();
  int uid = UserHandle.getAppId(Binder.getCallingUid());
  return checkComponentPermission(permission, pid, uid,
      -1, true);
}

int checkComponentPermission(String permission, int pid, int uid, 
    int owningUid, boolean exported) {
  if(pid == MY_PID)
    return 0;
  int appId = UserHandle.getAppId(uid);
  if(appId == 0 || appId == 1000 
      || UserHandle.isIsolated(uid) 
      || (owningUid >= 0 && UserHandle.isSameApp(uid, owningUid)))
    return 0;
  if (!exported)
    return 1;
  return checkUidPermission(permission, uid);
}

\end{lstlisting}
};

\end{tikzpicture}
}
\caption{Pseudo-code from the \texttt{\protect\seqsplit{ActivityManagerService}} class.\protect\footnotemark}
\label{p1:fig:appendix_exampleCode2}
\end{figure}

\footnotetext{\label{ams1}\scriptsize\url{http://androidxref.com/7.1.1_r6/xref/frameworks/base/services/core/java/com/android/server/am/ActivityManagerService.java\#21497}}

Aside from rules like those presented above, the filter also covers a few corner cases. Mainly \cqs and loop conditionals. As defined in Section~\ref{sec:id-auth-check}, any conditional statement in a \cq or using a \cq's return value should be considered a \cp. As such, the filter should always include these conditional statements as \cps. The filter uses the description of \cqs (see Section~\ref{sec:refine-cq}) to identify \cqs and preserve conditional statements as \cps if a conditional statement is within the body of a \cq or the \cq's return value is used in an chain of arithmetic expressions.
Moreover, as mentioned above, one of the main sources of noise in \tool's view of the authorization checks was loop conditionals (i.e., the conditional statements that decide if the control flow should exit a loop). As loops conditionals are not important when viewing Android's authorization checks the filter explicitly rejects all conditional statements that are loop conditionals.

\section{Completeness Proof for Closed Association Rule Mining}
\label{p1:app:appendix_proof}

\begin{lemma}
If $X \implies Y$ is a closed association rule and not confident (i.e., $\textit{conf}(X \implies Y) < minconf$), there does not exist an itemset $Y' \subset Y$ where $X \implies Y'$ is a confident association rule (i.e., $\textit{conf}(X \implies Y') \geq minconf$) unless $X \cup Y'$ is also a closed frequent itemset.
\end{lemma}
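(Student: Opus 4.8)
The plan is to reason entirely through the closure operator $\mathrm{cl}(I) := \beta(\alpha(I))$, using the two properties that are immediate from the definitions of $\alpha$, $\beta$, $\sigma$ and from Zaki and Hsiao: $\mathrm{cl}$ is monotone, and it preserves support, since $\alpha(\mathrm{cl}(I)) = \alpha(I)$ forces $\sigma(\mathrm{cl}(I)) = \sigma(I)$; moreover $C$ is closed exactly when $\mathrm{cl}(C) = C$. I would first dispose of the \emph{frequent} half of the conclusion, which is routine. Since $Y' \subset Y$ gives $X \cup Y' \subseteq X \cup Y$ and $\alpha$ is anti-monotone, $\sigma(X\cup Y') \ge \sigma(X\cup Y) \ge \textit{minsup}$, the last inequality holding because $X\cup Y$ is the frequent closed itemset underlying the closed rule $X\implies Y$. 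Hence $X\cup Y'$ is frequent, and only its \emph{closedness} is genuinely at stake.

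Next I would extract the one quantitative fact the hypotheses supply: a strict support gap. Dividing the two confidence inequalities by $\sigma(X) > 0$ yields $\sigma(X\cup Y) < \textit{minconf}\,\sigma(X) \le \sigma(X\cup Y')$, so $\sigma(X\cup Y') > \sigma(X\cup Y)$ strictly. Applying monotonicity of $\mathrm{cl}$ to $X\cup Y' \subseteq X\cup Y$ and using that $X\cup Y$ is closed gives the chain $X\cup Y' \subseteq \mathrm{cl}(X\cup Y') \subseteq \mathrm{cl}(X\cup Y) = X\cup Y$. Because $\mathrm{cl}$ preserves support, $\sigma(\mathrm{cl}(X\cup Y')) = \sigma(X\cup Y') > \sigma(X\cup Y)$, which already excludes the top of the chain, so $\mathrm{cl}(X\cup Y') \ne X\cup Y$. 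The closure is thereby pinned strictly below $X\cup Y$, and the whole claim reduces to ruling out that it sits strictly above $X\cup Y'$, i.e., to proving $\mathrm{cl}(X\cup Y') = X\cup Y'$.

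I expect this final step to be the main obstacle, and I would attack it by contradiction. If $X\cup Y'$ were not closed, pick $i \in \mathrm{cl}(X\cup Y')\setminus(X\cup Y')$; by the chain above $i \in (X\cup Y)\setminus(X\cup Y') \subseteq Y\setminus Y'$, so $i$ is an omitted consequent item occurring in \emph{every} entry point of $\alpha(X\cup Y')$. Setting $Y'' = Y'\cup\{i\}\subseteq Y$ gives $X\cup Y'' \subseteq \mathrm{cl}(X\cup Y')$, hence $\sigma(X\cup Y'') = \sigma(X\cup Y')$ and $\textit{conf}(X\implies Y'') = \textit{conf}(X\implies Y') \ge \textit{minconf}$ — a strictly larger confident sub-consequent. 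The difficulty is that nothing established so far forbids such an $i$: a support drop on the way out to the full $X\cup Y$ does not preclude an intermediate always-present item that preserves support (a small four-item transaction set exhibits exactly this configuration). To finish, I would therefore have to invoke the generation constraints under which the statement is set — the antecedent held constant at $X = A_j \cap (X\cup Y')$ and each rule formed from the \emph{entire} remainder of a closed itemset — and argue that the only confident $X\implies Y'$ the analysis actually retains is one for which this upward absorption of always-present items has already been performed, so that $X\cup Y'$ coincides with its own closure and is closed frequent. This is precisely the configuration named by the ``unless $X\cup Y'$ is also a closed frequent itemset'' escape. Pinning down this step, together with the degenerate cases $Y' = \emptyset$ and $\sigma(X) > 0$ that keep the confidence ratios defined, is where I would concentrate the care of the proof.
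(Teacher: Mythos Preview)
Your route through the closure operator $\mathrm{cl}=\beta\circ\alpha$ is a cleaner packaging of what the paper does. The paper first derives $\sigma(X\cup Y')>\sigma(X\cup Y)$ from the confidence gap (exactly your step), and then invokes a separate auxiliary lemma asserting that whenever $X\cup Y$ is frequent closed, $Y'\subset Y$, and $\sigma(X\cup Y')>\sigma(X\cup Y)$, the subset $X\cup Y'$ must itself be frequent closed. Your chain $X\cup Y'\subseteq\mathrm{cl}(X\cup Y')\subsetneq X\cup Y$ is that auxiliary lemma unrolled in one line.

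The obstacle you flag in your final paragraph is genuine, and it is not an artifact of your method: the same gap sits inside the paper's proof of the auxiliary lemma. There the argument picks a witness $X\cup Y''\supsetneq X\cup Y'$ with equal support and claims that $\sigma(X\cup Y\cup Y'')=\sigma(X\cup Y)$ contradicts closedness of $X\cup Y$; but when $Y''\subseteq Y$ one has $X\cup Y\cup Y''=X\cup Y$ and no contradiction arises --- precisely the ``intermediate always-present item $i\in Y\setminus Y'$'' scenario you describe. A concrete instance: transactions $\{a,b,c,d\}$, $\{a,b,c,d\}$, $\{a,b,c\}$ with $X=\{a\}$, $Y=\{b,c,d\}$, $Y'=\{b\}$. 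Then $X\cup Y$ is closed, $\sigma(X\cup Y')=1>\tfrac{2}{3}=\sigma(X\cup Y)$, yet $\mathrm{cl}(X\cup Y')=\{a,b,c\}\neq\{a,b\}$; with $\textit{minconf}=0.8$ this violates the main lemma as literally stated. So you are right that the claim only survives under the generation constraints imposed elsewhere in the paper (antecedent fixed, consequent taken as the full remainder of a closed itemset), and your attempt to close the argument by appealing to those constraints is the honest move. The paper's own proof does not acknowledge this dependence.
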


\begin{proof}
Let $\textit{conf}(X \cup Y) = \frac{\sigma(X \cup Y)}{\sigma(X)}$ and $\textit{conf}(X \cup Y') = \frac{\sigma(X \cup Y')}{\sigma(X)}$ where $Y' \subset Y$. If $\textit{conf}(X \implies Y') \geq \textit{conf}(X \implies Y)$, then $\sigma(X \cup Y') \geq \sigma(X \cup Y)$. Due to the monotonicity property, $\sigma(X \cup Y') \geq \sigma(X \cup Y)$, because $Y' \subset Y$ and $\sigma(Y') \geq \sigma(Y)$. If $\sigma(X \cup Y') = \sigma(X \cup Y)$, then $\textit{conf}(X \cup Y') < minconf$. If $\sigma(X \cup Y') > \sigma(X \cup Y)$, then $X \cup Y'$ is also a closed itemset according to Lemma~\ref{citemset-lemma}.
\end{proof}

\begin{lemma}
    \label{citemset-lemma}
    If $X \cup Y$ is a frequent closed itemset, then $X \cup Y'$ where $Y' \subset Y$ is also a frequent closed itemset if $\sigma(X \cup Y') > \sigma(X \cup Y)$.
\end{lemma}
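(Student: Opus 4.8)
The plan is to split the claim into its two halves --- ``frequent'' and ``closed'' --- disposing of the first at once and putting all the work into the second via the Galois closure operator implicit in $\alpha$ and $\beta$. For frequency: $X\cup Y$ being a frequent closed itemset gives $\sigma(X\cup Y)\ge\textit{minsup}$, and the hypothesis $\sigma(X\cup Y')>\sigma(X\cup Y)$ then yields $\sigma(X\cup Y')>\textit{minsup}$, so $X\cup Y'$ is frequent. Everything else is about showing $X\cup Y'$ is closed.

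For closedness I would work with $\varphi(S):=\beta(\alpha(S))$. First I would record the routine facts, all immediate from the definitions: $\varphi$ is extensive ($S\subseteq\varphi(S)$), idempotent, and monotone (it is a composite of the two antitone maps $\alpha$ and $\beta$); moreover $\alpha(\varphi(S))=\alpha(S)$, so $\sigma(\varphi(S))=\sigma(S)$; and, by the definition used in Section~\ref{subsubsec:arm}, an itemset $S$ is closed exactly when $\varphi(S)=S$. Now argue by contradiction. Write $Z:=X\cup Y$ and $Z':=X\cup Y'$, and suppose $Z'$ is not closed, i.e., $\varphi(Z')\supsetneq Z'$. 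Since $Z'\subseteq Z$, monotonicity together with closedness of $Z$ gives $\varphi(Z')\subseteq\varphi(Z)=Z$; hence $Z'\subsetneq\varphi(Z')\subseteq Z$. If one can force $\varphi(Z')=Z$, then $\sigma(Z')=\sigma(\varphi(Z'))=\sigma(Z)$, contradicting $\sigma(Z')>\sigma(Z)$, and the lemma follows.

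The main obstacle is precisely that last step. When $Y\setminus Y'$ is a single element $\{e\}$ it is automatic: there is no itemset strictly between $Z'$ and $Z=Z'\cup\{e\}$, so $Z'\subsetneq\varphi(Z')\subseteq Z$ forces $\varphi(Z')=Z$ and the argument closes. For a general $Y'\subset Y$ the natural route is induction on $|Y\setminus Y'|$, removing one element at a time and appealing to the single-element case; the difficulty is that to invoke the inductive hypothesis one needs each intermediate itemset along the chosen chain to still have support \emph{strictly} greater than $\sigma(Z)$, which the hypotheses do not guarantee --- indeed $\varphi(Z')$ can land strictly between $Z'$ and $Z$, so a non-closed itemset may sit between two closed ones. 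I therefore expect the workable version of the proof to be the single-element statement above (which is all that is needed to chain through the antecedent-constant closed rules of Section~\ref{subsubsec:arm} and the preceding lemma), with the general formulation either restricted to $|Y\setminus Y'|=1$ or accompanied by an extra argument exhibiting a support-friendly chain from $Z$ down to $Z'$.
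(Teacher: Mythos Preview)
Your analysis is sharper than the paper's own argument. The gap you flag in the general case is real: the lemma as stated is false. With transactions $t_1=\{a,b,c,d\}$ and $t_2=\{a,b,d\}$ (and $\textit{minsup}\le\tfrac12$), take $X=\{d\}$, $Y=\{a,b,c\}$, $Y'=\{a\}$. Then $X\cup Y=\{a,b,c,d\}$ is closed with support $\tfrac12$, and $\sigma(X\cup Y')=\sigma(\{a,d\})=1>\tfrac12$, yet $\varphi(\{a,d\})=\{a,b,d\}\neq\{a,d\}$, so $X\cup Y'$ is not closed. The closure lands strictly between $Z'$ and $Z$, exactly the obstruction you describe.

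The paper's proof has the same gap, only less explicitly. Its argument boils down to your containment $\varphi(Z')\subseteq Z$: from closedness of $X\cup Y$ and $\sigma((X\cup Y)\cup Y'')=\sigma(X\cup Y)$ it concludes $(X\cup Y)\cup Y''=X\cup Y$, i.e., $Y''\subseteq Y$, and then asserts a contradiction that does not follow. (The intermediate claim that $\sigma(X\cup Y\cup Y'')=\min(\sigma(X\cup Y),\sigma(X\cup Y''))$ is also not a consequence of monotonicity alone; it holds here only because $\alpha(X\cup Y)\subseteq\alpha(X\cup Y')=\alpha(X\cup Y'')$, so the intersection of the two transaction sets equals the smaller one.)

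What is actually true, and enough for the surrounding completeness argument, is the weaker statement your setup already yields: if $\sigma(X\cup Y')>\sigma(X\cup Y)$ then $\varphi(X\cup Y')$ is a frequent closed itemset with $X\cup Y'\subseteq\varphi(X\cup Y')\subsetneq X\cup Y$ and $\sigma(\varphi(X\cup Y'))=\sigma(X\cup Y')$. Thus any confident rule $X\Rightarrow Y'$ not captured at $X\cup Y$ is captured at the smaller closed itemset $\varphi(X\cup Y')$. Your single-element case is correct, and your instinct to restrict or reformulate the lemma is the right call.
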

\begin{proof}
    Let $\sigma(X \cup Y') > \sigma(X \cup Y)$ and $\alpha(X \cup Y) \neq \emptyset$. For $X \cup Y'$ to not be a frequent closed itemset, then $\exists Y'' \supset Y' \mid \sigma(X \cup Y') = \sigma(X \cup Y'')$ by the definition of a frequent closed itemset. If $\sigma(X \cup Y'') = \sigma(X \cup Y')$, then $\alpha(X \cup Y'') = \alpha(X \cup Y')$. Further, since $\sigma(X \cup Y') > \sigma(X \cup Y) \wedge \alpha(Y) \neq \emptyset$, then $\alpha(Y) \subset \alpha(Y'')$. If $\alpha(Y) \not\subset \alpha(Y'')$, then $\sigma(X \cup Y'') < \sigma(X \cup Y')$ by definition.
    Since $Y \supset Y'$ exists and $\sigma(X \cup Y'') = \sigma(X \cup Y')$, then $\sigma(X \cup Y \cup Y'') = min(\sigma(X \cup Y), \sigma(X \cup Y''))$ by the monotonicity property. Since $\sigma(X \cup Y') > \sigma(X \cup Y) \wedge \sigma(X \cup Y') = \sigma(X \cup Y'') \implies \sigma(X \cup Y \cup Y'') = \sigma(X \cup Y)$. Therefore, $\nexists X \cup Y'' \mid \sigma(X \cup Y'') = \sigma(X \cup Y')$, because $X \cup Y$ is defined as a closed itemset. Thus, $X \cup Y'$ is also a frequent closed itemset.
\end{proof}

\section{Non-security Inconsistencies}
\label{p1:app:appendix_eval_otherInconsistencies}

\tool identified \pTotalOtherInconsistencies{} inconsistencies (i.e., rules) that did not represent vulnerabilities. Aside from the 20 rules that were caused by easily fixed bugs in \tool, we resolve these non-security inconsistencies to their likely causes, and classify them into 9 categories, shown in   
Table~\ref{p1:table:other_inconsistencies} (~\ref{rq:causes}). 

\subsection{Irregular Coding Practices}
\label{p1:sec:eval_otherInconsistencies_codingPractices}

In addition to detecting vulnerabilities, \tool can also be useful for detecting irregular coding practices in Android's system services. This may not only improve code quality, but may also help increase computation speed, fix access bugs, or indicate locations of future vulnerabilities, as described below.

\myp{1. Shortcuts to Speed-Up Access} As Table~\ref{p1:table:other_inconsistencies} shows, \tool detected \pRulesSpeedUp{} inconsistencies that when fixed will improve the performance of the system. For example, consider the \ttt{BatteryStatsService}. In this service, whenever an entry point checks for the permission \ttt{UPDATE\_DEVICE\_STATS}, the entry point first calls \ttt{enforceCallingPermission}, which contains an additional \textit{PID} check that grants the service quick access to its own entry point, bypassing the permission check. This optimization speeds up access without a significant security-cost. Now consider another entry point \ttt{takeUidSnapshots} in the same service.
\tool recommended this same \textit{PID} check to be included when other permissions in the service \ttt{BATTERY\_STATS} permission is checked for this entry point, i.e., ensuring the consistent application of such valuable short cuts.

\myp{2. Fixing Access Bugs} \tool detected \pRulesIncreasedAccess{} inconsistencies that when fixed solve access-related bugs, i.e., discrepancies in how a permission should be used (i.e., as per the documentation). For example, consider the entry point \ttt{getUserCreationTime} of the \ttt{UserManagerService}. The entry point is currently limited to being called by either the same user as the user indicated by the \text{userId} passed in as an argument or a parent of user. However, other similar entry points in the service also grant access to callers with the \textit{signature} level permission \ttt{MANAGE\_USERS}. Thus, \tool recommends the addition of the \ttt{MANAGE\_USERS} permission check to this entry point. This is consistent with the documentation, 
which states that callers with the \ttt{MANAGE\_USERS} permission may call \ttt{getUserCreationTime}.

\myp{3. Potential Vulnerabilities} \tool detected \pRulesUnusedContextQueries{} inconsistencies which may lead to future vulnerabilities. Consider the entry point \ttt{deletePackage} in the \ttt{PackageManagerService}, which checks for the permission \ttt{INTERACT\_ACROSS\_USERS\_FULL} when verifying if the calling user can operate on the user represented by the \textit{userId} argument. Almost all the entry points in the service use the authorization check \ttt{enforceCrossUserPermission} to perform a similar user-related authorization check. That is, \ttt{deletePackage} still performs the hard-coded permission check, and is inconsistent with the majority that call the modular \ttt{enforceCrossUserPermission} permission check. Thus, there is as strong possibility that the hard-coded check in \ttt{deletePackage} may be overlooked when additional user-related enforcement is introduced, resulting in a vulnerability.

\subsection{Improvements to \tool's Accuracy}
\label{p1:sec:eval_ImproveAccuracy}

Our systematic investigation of \tool's results leads to 6 causes of non-security rules that motivate future work:

\myp{1. Difference in Functionality} A majority of the non-security inconsistencies (i.e., 189) were caused in cases where the target and \glspl{support_p1} contained unrelated protected operations, and thus, comparing their authorization checks resulted in unusable \glspl{arule_p1}. We are exploring techniques such as call graph comparison and method-name comparison to improve entry point groupings to mitigate such rules.

\myp{2. Checks With Different Arguments} We observed 66 non-security inconsistencies where same authorization checks were instantiated in code with slightly different arguments, without affecting the security context.
This problem may be mitigated by making \tool's analysis more precise, i.e., by analyzing consistency in terms of the relevant arguments  and variables that actually affect the security context of the authorization check. Such fine-grained analysis is a non-trivial problem for future work.

\myp{3. Noise in Captured Checks} Despite the use of the \cp filter, \tool still identifies some conditional statements and method calls improperly as authorization checks. We have already determined the statements causing this issue, and are addressing it via a routine refinement of the \cp filter as well as the expressions used to identify \cqs.

\myp{4. Restricted to Special Callers} We found that numerous entry points in the system are restricted to being called by special callers (e.g., the \textit{UID} of system, shell, or root). As a result, any \gls{arule_p1}s generated for such entry points are valueless since the \gls{target_p1} is more restrictive than the \glspl{support_p1}. 
We are exploring the integration of a unified view of the hierarchy among the different authorization checks in \tool (i.e., in terms of which checks supersede others) to mitigate such issues.

\myp{5. Semantic Groups of Checks} We observed that a number of unrelated permission checks are always accompanied by checks for the system or root \textit{UID} or isolated processes. Thus, \tool ends up generating rules using the UIDs/isolated process checks as \glspl{supportac_p1}, recommending unrelated permissions as the missing authorization checks. That is, since \tool does not consider such semantic groups, it generates multiple incorrect rules in cases where a few entry points check for different, unrelated, permissions, while also checking for UIDs/isolated processes. From our analysis of the results, we have discovered that such rules can be filtered out as the generally follow a pattern.

\myp{6. Equivalent Checks} \tool does not consider the semantic equivalence between authorization checks, and thus, cannot eliminate \glspl{arule_p1} generated due to multiple checks having the same outcome. Fortunately, we have discovered numerous sets of equivalent checks through this analysis, which we plan to apply to \tool to improve its accuracy. 


\end{extended}

\end{document}